\DeclareTextFontCommand{\str}{\ttfamily\upshape}
\newtheorem{theorem}{Theorem}
\newtheorem{example}[theorem]{Example}
\newtheorem{lemma}[theorem]{Lemma}
\newtheorem{remark}{Remark}
\newtheorem{conjecture}[theorem]{Conjecture}
\DeclareMathOperator{\GF}{\mathbb{F}}
\DeclareMathOperator{\ord}{ord}
\newcommand{\BWT}{\textsc{BWT}\xspace}
\DeclareMathOperator{\DB}{DB}
\DeclareMathOperator{\rle}{\textsc{BWT}\xspace}
\newcommand{\Z}{\mathbb{Z}}
\newcommand{\N}{\mathbb{N}}
\newcommand{\redstr}[1]{\textcolor{red}{\str{#1}}}
\newcommand{\bluestr}[1]{\textcolor{blue}{\str{#1}}}
\title{Unclustered BWTs of any Length\\ over Non-Binary Alphabets}
\author{Gabriele Fici, Estéban Gabory,\\Giuseppe Romana, Marinella Sciortino}
\date{August 2025}
\begin{document}

\maketitle

\begin{abstract}
We prove that for every integer $n > 0$ and for every alphabet $\Sigma_k$ of size $k \geq 3$, 
there exists a necklace of length $n$ whose Burrows--Wheeler Transform (BWT) 
is completely unclustered, i.e., it consists of exactly $n$ runs with no two consecutive 
equal symbols. These words represent the worst-case behavior of the BWT for clustering, since the number of BWT runs is maximized. We also establish a lower bound on their number. This contrasts with the binary case, where the existence of infinitely many completely unclustered BWTs is still an open problem, related to Artin's conjecture 
on primitive roots. 

\end{abstract}

\section{Introduction}
The Burrows–Wheeler Transform (BWT) is a transformation on words introduced in 1994 \cite{BW94}. When applied to a text, it produces a permutation obtained by concatenating the last character of each cyclic rotation (also called shift) of the text, after sorting all such rotations in lexicographic order. 
Since two shifts of the same word yield the same BWT, this transform can be naturally viewed as a mapping from the set of necklaces over a finite alphabet $\Sigma_k=\{0,1,\ldots, k-1\}$ (i.e., equivalence classes of words under shift operation) to the set of words of $\Sigma_k^*$.
The BWT is a widely used tool in data compression and indexing, because it is very likely to produce long runs of identical consecutive symbols (BWT-runs) when the input text is highly repetitive (clustering effect). This property is effectively exploited when combined with run-length encoding techniques. However, it is interesting to note that, while the BWT of a word typically exhibits some degree of clustering, there exist texts for which no such clustering occurs.

In this work, we focus on the worst-case scenario: words whose BWT is completely unclustered, meaning that no two consecutive characters are the same. That is, the BWT consists of exactly as many runs as the length of the word.  These unclustered BWTs also represent the worst cases for compression and indexing algorithms that use the Run-Length BWT.

This problem was previously studied by Mantaci et al. \cite{mantaci_measuring_2017} in the case of binary alphabets, revealing a connection with the generators of the multiplicative group $\Z^*_{n}$ of integers modulo $n$, as stated in the following result:

\begin{theorem}[\cite{mantaci_measuring_2017}]\label{thm:binary}
There exists a $2n$-length necklace with completely unclustered \BWT if and only if $2n+1$ is an odd prime and $2$ generates the cyclic group $\Z^*_{2n+1}$.
\end{theorem}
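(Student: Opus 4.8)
The plan is to translate the combinatorial condition into a statement about the action of multiplication by $2$ on $\Z_{2n+1}$, via the standard permutation (the $\mathrm{LF}$-mapping) attached to a \BWT. First I would reduce to primitive necklaces: if a necklace equals $u^k$ with $k\ge 2$, then its \BWT is obtained from $\BWT(u)$ by repeating each symbol $k$ times and hence contains a run of length $\ge 2$, so it is not completely unclustered; thus the necklace has $2n$ pairwise distinct rotations. Over $\bin$ a completely unclustered word of length $2n$ is either $(01)^n$ or $(10)^n$; but the first symbol of the \BWT of any word using both letters is $1$ (the lexicographically smallest rotation has the form $0^{a}1\cdots$ with $a\ge 1$, and if its last letter were $0$ the preceding rotation would begin with $0^{a+1}$, contradicting minimality), while a word over a single letter has a constant \BWT for $2n\ge 2$. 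Hence a completely unclustered \BWT of length $2n$ over $\bin$ must be exactly $L=(10)^n$.

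The core computation is the standard permutation $\pi$ of $L=(10)^n$. Its first column is the sorted word $F=0^{n}1^{n}$, so, indexing positions by $\{1,\dots,2n\}$, the $j$-th occurrence of $1$ in $L$ sits at position $2j-1$ and is mapped to position $n+j$, while the $j$-th occurrence of $0$ sits at position $2j$ and is mapped to position $j$. In both cases the image is the unique representative in $\{1,\dots,2n\}$ of $2^{-1}i \bmod (2n+1)$; note that $2$ is invertible modulo $2n+1$ simply because $2n+1$ is odd, regardless of primality. Thus $\pi$ is precisely multiplication by $2^{-1}$ on $\Z_{2n+1}\setminus\{0\}$.

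Next I would invoke the known characterization that a word over $\bin$ is the \BWT of a primitive necklace if and only if its standard permutation is a single cycle. Combined with the previous steps, a completely unclustered \BWT of length $2n$ over $\bin$ exists if and only if multiplication by $2^{-1}$ acts as a single $2n$-cycle on $\Z_{2n+1}\setminus\{0\}$. Since multiplication by a unit preserves the gcd with $2n+1$, transitivity forces every nonzero residue to be coprime to $2n+1$, i.e.\ $\varphi(2n+1)=2n$, i.e.\ $2n+1$ is prime; and then the cycle has full length $2n$ exactly when $2^{-1}$, equivalently $2$, generates $\Z^{*}_{2n+1}$. This is the claimed equivalence, and it moreover yields the necklace explicitly: $w=w_0\cdots w_{2n-1}$ with $w_j=0$ iff $2^{j}\bmod(2n+1)\in\{1,\dots,n\}$ satisfies $\BWT(w)=(10)^n$.

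The step I expect to be the main obstacle is the clean identification of the standard permutation of $(10)^n$ with modular multiplication, together with a correct invocation (or, if one wants a self-contained argument, a proof) of the single-cycle characterization of \BWT images. If one avoids that characterization, the delicate point becomes verifying directly that the word $w$ above is primitive, i.e.\ that no nontrivial subgroup $H\le\Z^{*}_{2n+1}$ stabilizes $\{n+1,\dots,2n\}$ under multiplication; this is immediate when $|H|$ is even, since then $-1\in H$ and negation exchanges $\{1,\dots,n\}$ with $\{n+1,\dots,2n\}$, but the case of odd $|H|$ needs a bit more care.
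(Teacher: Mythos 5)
Your proposal is correct, and it follows essentially the same strategy the paper itself uses when it proves the $k$-ary generalization in Theorem~\ref{thm:artin}: reduce to the single candidate word ($(10)^n$, resp.\ $\alpha^n$), identify its standard permutation with multiplication by $2^{\pm1}$ modulo $2n+1$, and apply the single-cycle criterion of Lemma~\ref{lem:bwt-cycle} (the paper gives no separate proof of Theorem~\ref{thm:binary}, which is quoted from \cite{mantaci_measuring_2017}). The only differences are cosmetic --- you compute $\pi$ as multiplication by $2^{-1}$ where the paper's relabeling yields multiplication by $k$, which is immaterial since a permutation is a full cycle iff its inverse is --- and your closing worry about verifying primitivity directly is unnecessary, since Lemma~\ref{lem:bwt-cycle} already packages aperiodicity into the single-cycle condition.
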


However, the question of whether infinitely many binary words with completely unclustered BWT exist is left open in \cite{mantaci_measuring_2017}. This problem is tightly connected to the still-open
Artin’s conjecture on the existence of primitive roots modulo infinitely many prime numbers.

Here, we move beyond the binary setting and prove that the existence of completely unclustered BWTs is guaranteed for every length as soon as the alphabet contains at least three letters. In particular, we prove the following:

\begin{restatable*}{theorem}{mainresult}\label{thm:main-result}
For every integer $n>0$, and $k\ge 3$, there exists a necklace $u$ of length $n$ over the alphabet $\Sigma_k$ having completely unclustered \BWT. 
\end{restatable*}

Our approach builds upon the structure of generalized de~Bruijn words and their correspondence with Hamiltonian paths in generalized de~Bruijn graphs introduced in~\cite{fici2025generalizedbruijnwordsinvertible}. In particular, we prove the following result:

\begin{restatable*}{theorem}{unclusteredDBW}\label{thm:unclusteredDBW}
For every integer $n>0$, and $k\ge 3$, there exists a generalized de Bruijn word of length $kn$ over $\Sigma_k$ with completely unclustered \BWT. In particular, there exists a (standard) de Bruijn word of length $k^n$ over $\Sigma_k$ with completely unclustered \BWT.
\end{restatable*}

The key point is that generalized de~Bruijn words, defined in the sense of~\cite{fici2025generalizedbruijnwordsinvertible}, are necklaces having a BWT consisting of consecutive blocks each of which is a permutation of the alphabet. As proved in~\cite{fici2025generalizedbruijnwordsinvertible}, generalized de~Bruijn words are precisely the Hamiltonian cycles of the generalized de~Bruijn graphs, introduced in the early $80$'s independently by Imase and Itoh~\cite{DBLP:journals/tc/ImaseI81}, and by Reddy, Pradhan, and Kuhl~\cite{Reddy} (see also~\cite{DBLP:journals/networks/DuH88}) in the context of network design.

Here, we provide a new algorithmic strategy that exploits combinatorial properties of generalized de~Bruijn graphs, making it possible 
to remove equal letters between successive blocks of the BWT, thereby obtaining necklaces of length $kn$ whose BWTs are completely unclustered. We then show how to extend this procedure to every 
length, not just multiples of the size of the alphabet.

Beyond existence, we also investigate counting aspects: we provide a lower bound of $\Omega(2^n/n)$ on the number of ternary necklaces of length $3n$ with completely unclustered \BWT. Finally, in Section~\ref{sec:Artin} we examine a special case connected to Artin’s conjecture, showing that certain highly structured words are \BWT images precisely when $kn+1$ is prime and $k$ is a primitive root modulo $kn+1$.

\section{Preliminaries}

We begin by introducing some preliminary definitions. For a thorough introduction, we refer the reader to \cite{crochemore_algorithms_2007} and \cite{lothaire_combinatorics_1983}.

Let $\Sigma_k=\{0,1,\ldots,k-1\}$, $k>1$, be a sorted set of \emph{letters}, called the \emph{alphabet}. 
  
A \emph{word} over the alphabet $\Sigma_k$ is a concatenation of elements of $\Sigma_k$. The \emph{length} of a word $w$ is denoted by $|w|$. %
 For a letter $i\in\Sigma_k$, $|w|_i$ denotes the number of occurrences of $i$ in $w$. The vector $(|w|_{0},\ldots,|w|_{k-1})$ is the \emph{Parikh vector} of $w$.

Let  $w=w_0\cdots w_{n-2}w_{n-1}$ be a word of length $n>0$. When $n>1$, the \textit{shift} of $w$ is the word $\sigma(w)=w_{n-1}w_0\cdots w_{n-2}$.

For a word $w$, the \emph{$n$-th power} of $w$ is the word $w^n$ obtained by concatenating $n$ copies of $w$. We call an \emph{alphabet-permutation}  a word over $\Sigma_k$ that contains each letter of $\Sigma_k$ exactly once, and an \emph{alphabet-permutation power} a concatenation of one or more alphabet-permutations over $\Sigma_k$.

\begin{example}\label{ex:alph_permut}
    The word $w=210201102102120$ is an alphabet-permutation power over $\Sigma_3$.
\end{example}

Two words $w$ and $w'$ are \emph{conjugates} if $w=uv$ and $w'=vu$ for some words $u$ and $v$. The conjugacy class of a word $w$ can be obtained by repeatedly applying the shift operator, and contains $|w|$ distinct elements if and only if $w$ is \emph{primitive}, i.e., for any nonempty word $v$ and integer $n$, $w= v^n$ implies $n=1$. %
A \emph{necklace} (resp.~\emph{aperiodic necklace}) $[w]$ is a conjugacy class of words (resp.~of \emph{primitive words}). Necklaces are also called \emph{circular words}. 

For example, $[1100]=\{1100,0110,0011,1001\}$ is an aperiodic necklace, while $[1010]=\{1010,0101\}$ is not aperiodic.

 The \textit{Burrows--Wheeler matrix} (BWT matrix) of a necklace $[w]$ is the matrix whose rows are the $|w|$ shifts of $w$ in ascending %
 lexicographic order. Let us denote by $F$ and $L$, respectively, the first and the last column of the BWT matrix of $[w]$. We have that $F=0^{n_0}1^{n_1}\cdots (k-1)^{n_{k-1}}$, where $(n_0,\ldots, n_{k-1})$ is the Parikh vector of $[w]$; $L$, instead, is the \emph{Burrows--Wheeler Transform} (BWT) of $[w]$. The BWT is therefore a map from the set of necklaces to the set of words. As is well known, it is an injective map. A word is a \emph{BWT image} if it is the BWT of some necklace. 

The \emph{run-length encoding} of a word $w\in\Sigma^n$, denoted $\rle(w)$, is the sequence $(w_i,\ell_i)_{\{1\le i \le r\}}\in(\Sigma\times \N)^r$, where $w=w_1^{\ell_1}\dots w_r^{\ell_r}$, and $w_i\neq w_{i+1}$ for every $1\le i \le r$. If $w=\BWT(u)$ for a necklace $u$, we write $\rho(u)=r$. We say that $u$ has a \emph{completely unclustered \BWT} if $\rho(u)=|u|$.

The \emph{standard permutation}  of a word $u=u_0u_1\cdots u_{n-1}$,  $u_i\in\Sigma_k$, is the permutation $\pi_u$ of $\{0,1,\ldots,n-1\}$ such that $\pi_u(i)<\pi_u(j)$ if and only if $u_i<u_j$ or  $u_i=u_j$ and $i<j$. In other words, in one-line notation, $\pi_u$ orders distinct letters of $u$ lexicographically, and equal letters by occurrence order, starting from $0$. %
When $u$ is the BWT image of some necklace, the standard permutation is also called $LF$-mapping. %

The \emph{inverse standard permutation} $\pi^{-1}_u$ of a word $u$, written in one-line notation, can be obtained by listing in left-to-right order the positions of $0$ in $u$, then the positions of $1$, and so on. The inverse standard permutation of a BWT image is also called $FL$-mapping.

\begin{lemma}[Folklore]\label{lem:bwt-cycle}
A $n$-length word $w$ is a $\BWT$ of some aperiodic necklace $u$ if and only if $\pi_u$ (or equivalently, $\pi^{-1}_u$) is a cycle of length $n$.
\end{lemma}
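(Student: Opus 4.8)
The plan is to reconstruct a candidate necklace from the word $w$ using the $FL$-mapping and argue that this process yields a single cyclic word exactly when the associated permutation is a single $n$-cycle. First I would recall the standard BWT-inversion procedure: given a word $w$ of length $n$, form its first column $F = 0^{n_0}1^{n_1}\cdots(k-1)^{n_{k-1}}$ where $(n_0,\ldots,n_{k-1})$ is the Parikh vector of $w$, and consider the inverse standard permutation $\pi^{-1}_w$, which sends row $i$ (labeled by the symbol $F_i$ in the sorted first column) to the row whose last symbol is the particular occurrence of $F_i$ that sits in position $i$ of $F$. Concretely, starting from any index $i_0$ and iterating $i_{j+1} = \pi^{-1}_w(i_j)$, one reads off symbols $w_{i_0}, w_{i_1}, w_{i_2}, \ldots$; after $n$ steps this traces out a conjugate of some word $u$, and $w$ is the BWT of $[u]$ provided the matrix of rotations of $u$, sorted lexicographically, has $w$ as its last column. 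This inversion always produces a word whose BWT is $w$ when $w$ is a BWT image at all; the content of the lemma is pinning down when the word so produced has length exactly $n$ and is primitive.

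The key observation is that the orbit of any index under $\pi^{-1}_w$ has length equal to the length of the primitive root of the word it spells out, and $w$ decomposes along the disjoint cycles of $\pi^{-1}_w$ into the interleaving of the BWTs of those primitive roots. More precisely, I would prove: (i) if $\pi^{-1}_w$ is a single $n$-cycle, then the inversion procedure produces a word $u$ of length $n$ that is primitive (since a proper power $v^m$ with $m\ge 2$ would force $\pi_u$, hence $\pi^{-1}_u = \pi^{-1}_w$, to split into $m$ cycles each of length $|v|$, contradicting transitivity), and by the standard correctness of BWT inversion $w = \BWT([u])$ with $[u]$ aperiodic; (ii) conversely, if $w = \BWT(u)$ for an aperiodic necklace $u$ of length $n$, then the rows of the BWT matrix of $[u]$ are the $n$ distinct rotations of $u$ (distinct precisely because $u$ is primitive), and $\pi^{-1}_u$ acts on these rows as "delete the last symbol, prepend it" — i.e.\ as the shift $\sigma$ on the (distinct) rotations of $u$, which is manifestly a single $n$-cycle because $u$ is primitive. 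The equivalence "$\pi_u$ is an $n$-cycle iff $\pi^{-1}_u$ is an $n$-cycle" is immediate since one is the inverse of the other and a permutation is a single cycle iff its inverse is.

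The main obstacle — though it is only mild bookkeeping — is making precise the claim in (i) that BWT inversion applied to an arbitrary word $w$ always returns a necklace whose BWT is $w$ when some necklace with BWT $w$ exists, and in particular carefully justifying that the symbols read along a cycle of $\pi^{-1}_w$ really do form (a power of) a Lyndon-type primitive word whose sorted rotations reproduce the corresponding sub-multiset of rows of the alleged BWT matrix. This is the part where one has to invoke the structure of the standard permutation: the cycle structure of $\pi_u$ is conjugacy-invariant and the cycle lengths correspond exactly to the exponent in the primitive decomposition. I would handle this by citing the classical characterization of BWT images (e.g.\ via the bijection between words and pairs (Parikh-compatible permutation, cycle-labeled structure)) rather than reproving it, and then the two directions above close quickly. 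Since the statement is labeled \emph{Folklore}, a compact argument along these lines — shift acts as $\pi^{-1}$ on rotations, primitivity is equivalent to those rotations being distinct, and a permutation is transitive iff it is a single cycle — should suffice.
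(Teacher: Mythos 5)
The paper does not prove this lemma at all --- it is stated as folklore and used as a black box --- so there is no in-paper argument to compare against; your proposal supplies the standard argument, and it is essentially correct. Your direction (ii) is clean and complete: the LF-mapping sends a row of the BWT matrix to its one-step rotation, the $n$ rotations of a primitive word are pairwise distinct and form a single orbit under the shift, hence $\pi_w$ is an $n$-cycle (and a permutation is an $n$-cycle iff its inverse is). The one point to tighten is direction (i): you propose to justify that reading symbols along the unique cycle of $\pi^{-1}_w$ yields a word $u$ with $\BWT([u])=w$ by ``citing the classical characterization of BWT images,'' but that characterization \emph{is} the lemma being proved, so as written this risks circularity. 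The non-circular version is the Gessel--Reutenauer-type statement: for \emph{any} word $w$, the cycles of $\pi_w$ spell out a multiset of primitive necklaces, and $w$ is recovered as the last column of the matrix of all their rotations sorted lexicographically (proved by induction on suffix length, using that occurrences of a fixed letter in $L$ appear in the same relative order as the corresponding rows in $F$). Once that is in place, your primitivity argument (a proper power would split the permutation into several equal-length cycles, contradicting transitivity) closes the converse correctly. Also note a cosmetic slip inherited from the paper's statement: the permutation in question is the standard permutation of the word $w$ (equivalently of $\BWT(u)$), so it should be written $\pi_w$ rather than $\pi_u$; your proof implicitly uses the right object.
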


\begin{example}\label{ex:cyclic}
Let us consider the word $w=210201102102120$ used in Example \ref{ex:alph_permut}. The standard permutation and its cycle decomposition are as follows: %
\begin{align*}
\pi_w &= 
\left(
\begin{array}{ccccccccccccccc}
0 & 1 & 2 & 3 & 4 & 5 & 6 & 7 & 8 & 9 & 10 & 11 & 12 & 13 & 14 \\
10 & 5 & 0 & 11 & 1 & 6 & 7 & 2 & 12 & 8 & 3 & 13 & 9 & 14 & 4
\end{array}
\right) \\
&= (0, 10, 3, 11, 13, 14, 4, 1, 5, 6, 7, 2)(8, 12, 9)
\end{align*}
By using Lemma \ref{lem:bwt-cycle}, $w$ is not the image of any aperiodic necklace under the BWT. 
\end{example}

 Let $ G = (V, E)$ be a finite directed graph, which may have loops and multiple edges (in this case, it is also called a multigraph in the literature). Each edge
$e \in E$ is directed from its source vertex $s(e)$ to its target vertex $t(e)$. We write $indeg(v)=|\{e\in E,~ t(e)=v\}|$, and $outdeg(v)=|\{e\in E,~ s(e)=v\}|$ for every $v\in V$.

The directed \emph{line graph} (or \emph{edge graph}) $\mathcal{L} G =(E, E')$ of $G$ has as vertices the edges of $G$, and as edges the set
$$E' = \{(e_1, e_2) \in E \times E \mid s(e_2) = t(e_1)\}.$$

A directed graph is \emph{Hamiltonian} if it has a \emph{Hamiltonian cycle}, i.e., one that traverses each node exactly once, while it is \emph{Eulerian} if it has an \emph{Eulerian cycle}, i.e., one that traverses each edge exactly once. As is well known, a directed graph is Eulerian if and only if $indeg(v) = outdeg(v)$ for all vertices $v$. If a graph is Eulerian, its line graph is Hamiltonian.

Recall that the de Bruijn graph $\DB(k,k^d)$ of order $d$ is the directed graph whose vertices are the words of length $d$ over $\Sigma_k$ and there is an edge from $a_iw$ to $v$, where $a_i$ is a letter, if and only if $v=wa_j$ for some letter $a_j$. As is well known, de Bruijn graphs are Eulerian and Hamiltonian. One has $\mathcal{L}\DB(k,k^d) = \DB(k,k^{d+1})$. Now, since the vertex labels of $\DB(k,k^d)$ are the base-$k$ representations of the integers in $\{0,1,\ldots,k^d-1\}$ on $d$ digits (padded with leading zeroes), the de Bruijn graph of order $d$ over $\Sigma_k$ can be equivalently defined as the one having vertex set $\{0,1,\ldots,k^d-1\}$ and containing an edge from $m$ to $km+i \mod (k^d)$ for every $i=0,1,\ldots,k-1$.

The \emph{generalized de Bruijn graph} $\DB(k,n)$ has vertices $\{0,1,\ldots,n-1\}$ and for every vertex $m$ there is an edge from $m$ to $km+i \mod (n)$ for every $i=0,1,\ldots,k-1$. Generalized de Bruijn graphs have been introduced independently by Imase and Itoh~\cite{DBLP:journals/tc/ImaseI81}, and by Reddy, Pradhan, and Kuhl~\cite{Reddy} (see also~\cite{DBLP:journals/networks/DuH88}). They are Eulerian and Hamiltonian. 
The line graph of $\DB(k,n)$ is $\DB(k,kn)$~\cite{DBLP:journals/dm/LiZ91}. %
More precisely, one can show the following: 

\begin{lemma}[\cite{DBLP:journals/dm/LiZ91}]\label{lem:line graph}
Given $k,n$ integers, the graph $\DB(k,kn)$ is isomorphic to the line graph of $\DB(k,n)$, where the node corresponding to an edge $(i,ki+j)\mod n$ gets label $ki+j\mod kn$
\end{lemma}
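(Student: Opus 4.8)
The plan is to exhibit an explicit bijection between the edge set of $\DB(k,n)$ and the vertex set $\{0,1,\ldots,kn-1\}$ of $\DB(k,kn)$, and then check that it turns the line-graph adjacency into the adjacency of $\DB(k,kn)$. First I would fix notation: an edge of $\DB(k,n)$ is a pair $(i,ki+j \bmod n)$ with $i\in\{0,\ldots,n-1\}$ and $j\in\{0,\ldots,k-1\}$; equivalently it is the ordered pair $(i,j)$, so there are exactly $kn$ edges, matching the number of vertices of $\DB(k,kn)$. I assign to this edge the label $\varphi(i,j)=ki+j \bmod kn$. The first key step is to verify that $\varphi$ is a bijection onto $\{0,\ldots,kn-1\}$: since $0\le j\le k-1$ and $0\le i\le n-1$, the integer $ki+j$ already lies in $\{0,\ldots,kn-1\}$, so the reduction mod $kn$ is vacuous, and $(i,j)\mapsto ki+j$ is precisely the base-$k$ digit decomposition, hence injective and surjective.

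The second key step is the adjacency check. In the line graph $\mathcal{L}\DB(k,n)$ there is an edge from $e_1=(i_1,j_1)$ to $e_2=(i_2,j_2)$ exactly when the target of $e_1$ equals the source of $e_2$, i.e.\ when $i_2 \equiv k i_1 + j_1 \pmod n$. I must show this holds if and only if $\DB(k,kn)$ has an edge from $\varphi(e_1)$ to $\varphi(e_2)$, i.e.\ if and only if $\varphi(e_2) \equiv k\varphi(e_1) + \ell \pmod{kn}$ for some $\ell\in\{0,\ldots,k-1\}$. Compute $k\varphi(e_1) + \ell = k(ki_1+j_1) + \ell = k(ki_1+j_1) + \ell$; modulo $kn$ this is $k\,(ki_1 + j_1 \bmod n) + \ell$ — here one uses that multiplying by $k$ commutes with reduction mod $n$ up to reduction mod $kn$, namely $k\cdot(x \bmod n) \equiv kx \pmod{kn}$. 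So the out-neighbours of $\varphi(e_1)$ in $\DB(k,kn)$ are exactly the integers $k i_2 + \ell \bmod kn$ with $i_2 = ki_1+j_1 \bmod n$ and $\ell$ ranging over $\{0,\ldots,k-1\}$; by the base-$k$ decomposition these are precisely $\varphi(i_2,\ell) = \varphi(e_2)$ for the $k$ edges $e_2$ leaving vertex $i_2$ in $\DB(k,n)$, which are exactly the line-graph successors of $e_1$. Thus $\varphi$ is a graph isomorphism.

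I expect the main (minor) obstacle to be bookkeeping the modular arithmetic cleanly — in particular justifying the identity $k\cdot(x \bmod n) \equiv kx \pmod{kn}$ and making sure no spurious reductions are needed because the relevant integers already lie in the intended ranges. One should also note the edge case $n=1$ (where $\DB(k,1)$ is a single vertex with $k$ loops and the statement degenerates correctly) and remark that multiplicities and loops are preserved, since $\varphi$ is a genuine bijection on edges/vertices and the line-graph construction records each ordered pair of composable edges once. With these checks in place the isomorphism, with the stated labelling, follows immediately.
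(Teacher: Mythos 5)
Your proposal is correct and complete: the bijection $(i,j)\mapsto ki+j$ is indeed a bijection onto $\{0,\ldots,kn-1\}$ because the reduction mod $kn$ is vacuous, and the adjacency check via $k\cdot(x \bmod n)\equiv kx \pmod{kn}$ is exactly the identity needed; your remarks on multiplicities and the degenerate case $n=1$ are also accurate. Note that the paper itself gives no proof of this lemma --- it is stated as a cited result from the literature --- so there is no in-paper argument to compare against, but your direct verification is the standard one and fully justifies the statement as used later (e.g.\ in Figure~\ref{fig:permutation-to-graph} and Lemma~\ref{lem:untie}).
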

The Eulerian cycles of $\DB(k,n)$ correspond to the Hamiltonian cycles of $\DB(k,kn)$.%

\begin{figure}
    \centering
    \includegraphics[width=0.7\linewidth]{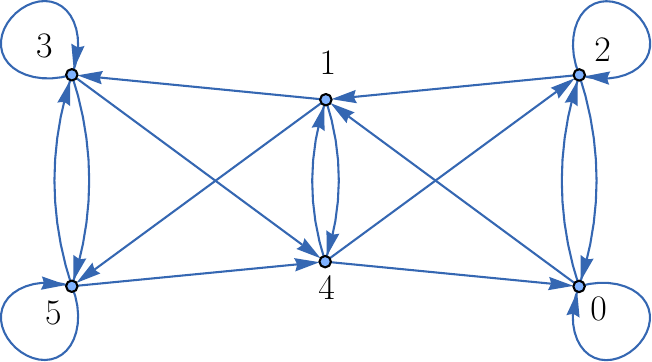}
    \caption{Generalized de Bruijn graph $\DB(3,6)$. The graph has six vertices, labeled $0,1,2,3,4,5$, each representing an integer modulo $6$. 
From every vertex $m$, there are directed edges to vertices $(3m+i) \bmod 6$ 
for each $i \in \{0,1,2\}$.}
    \label{fig:DB(3,6)}
\end{figure}

A \emph{de Bruijn word} of order $d$ over $\Sigma_k$ is a necklace over $\Sigma_k$ containing as a circular factor each of the $\Sigma_k^d$ words of length $d$ over $\Sigma_k$ exactly once. De Bruijn words correspond to Hamiltonian cycles in $\DB(\Sigma_k,k^d)$, as they can be obtained by concatenating the first letter of the labels of the vertices it traverses.

De Bruijn words were generalized in several different ways (see, e.g.,~\cite{blum_comparison_1983,au_generalized_2015,nellore_arbitrary-length_2022}). We use here the notion from~\cite{fici2025generalizedbruijnwordsinvertible}:
A necklace $u$ of length $kn$ over the alphabet $\Sigma_k$ is a \emph{generalized de Bruijn word} if $\BWT(u)$ is an alphabet-permutation power. Generalized de Bruijn words are aperiodic necklaces.

\begin{example}
    The necklace $v=[011021001120212220]$ is a generalized de Bruijn word of length $18$. In fact, $BWT(v)=210012210012210021$ is an alphabet-permutation power. Note that $\rho(v)=13$.
\end{example}

Observe that not all alphabet-permutation powers are BWT images, as shown in Example \ref{ex:cyclic}.

The following theorem establishes a direct correspondence between generalized de Bruijn words and Hamiltonian paths in generalized de Bruijn graphs.

\begin{theorem}[\cite{fici2025generalizedbruijnwordsinvertible}]\label{thm:dbw-graph}
A necklace $u$ of length $kn$ over $\Sigma_k$ is a generalized de Bruijn word if and only if $\pi_{\BWT(u)}^{-1}$, as a cycle, corresponds to a Hamiltonian path of $\DB(k,kn)$.
\end{theorem}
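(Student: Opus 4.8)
The plan is to make the phrase ``$\pi_{\BWT(u)}^{-1}$, as a cycle, corresponds to a Hamiltonian path of $\DB(k,kn)$'' fully explicit and then match it against the definition of a generalized de Bruijn word. First I would unwind the arcs of $\DB(k,kn)$: by definition $(m,v)$ with $m,v\in\{0,\dots,kn-1\}$ is an arc iff $v\equiv km+i\pmod{kn}$ for some $i\in\{0,\dots,k-1\}$; since $km\bmod kn=k(m\bmod n)$, this forces $i=v\bmod k$ and is equivalent to the single congruence $\lfloor v/k\rfloor\equiv m\pmod n$ (compatibly with Lemma~\ref{lem:line graph}, which identifies $\DB(k,kn)$ with the line graph of $\DB(k,n)$). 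Hence a permutation $\sigma$ of $\{0,\dots,kn-1\}$ traces a Hamiltonian cycle of $\DB(k,kn)$ precisely when it is a single $kn$-cycle and $\lfloor\sigma(m)/k\rfloor\equiv m\pmod n$ for every $m$; I will call this congruence property $(\star)$, and note that for $\sigma=\pi_{\BWT(u)}^{-1}$ being a single $kn$-cycle is exactly $u$ being aperiodic, by Lemma~\ref{lem:bwt-cycle}. I would also record the elementary facts about the FL-mapping: writing $S_c=\sum_{c'<c}|w|_{c'}$, the value $\pi_w^{-1}(S_c+r)$ is the position of the $r$-th occurrence of the letter $c$ in $w$ (counting from $0$), and $w[\pi_w^{-1}(m)]=F[m]$. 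The whole statement then reduces to: $\BWT(u)$ is an alphabet-permutation power $\iff$ $u$ is aperiodic and $\pi_{\BWT(u)}^{-1}$ satisfies $(\star)$.

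For the direction ``generalized de Bruijn $\Rightarrow(\star)$'', let $w=\BWT(u)=p_0p_1\cdots p_{n-1}$ with each $p_\ell$ a permutation of $\Sigma_k$. Then every letter occurs exactly $n$ times in $w$, so the Parikh vector of $u$ is $(n,\dots,n)$, $F=0^n1^n\cdots(k-1)^n$, and $S_c=cn$. The $r$-th occurrence of $c$ sits inside the block $p_r$, say at offset $q\in\{0,\dots,k-1\}$, so $\pi_w^{-1}(cn+r)=rk+q$ and $\lfloor\pi_w^{-1}(cn+r)/k\rfloor=r\equiv cn+r\pmod n$; thus $(\star)$ holds. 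Since generalized de Bruijn words are aperiodic, $\pi_w^{-1}$ is a single $kn$-cycle, so it traces a Hamiltonian cycle (equivalently, after deleting one arc, a Hamiltonian path) of $\DB(k,kn)$.

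For the converse, assume $w=\BWT(u)$ satisfies $(\star)$ and $\pi_w^{-1}$ is a single $kn$-cycle. The crucial step is that $(\star)$ already forces the Parikh vector of $u$ to be $(n,\dots,n)$. Indeed, fix $c$ and let $v_0<v_1<\dots<v_{m_c-1}$ be the positions of $c$ in $w$, where $m_c=|w|_c$; applying $(\star)$ to $m=S_c+r$ yields $\lfloor v_r/k\rfloor=(S_c+r)\bmod n$ for $0\le r<m_c$, because $\lfloor v_r/k\rfloor\in\{0,\dots,n-1\}$. As $r$ grows the right-hand side increases by $1$ until it reaches $n-1$ and then drops to $0$; if $m_c\ge n+1$ this drop occurs at some index $r$ with $r+1\le m_c-1$, giving $\lfloor v_r/k\rfloor=n-1>0=\lfloor v_{r+1}/k\rfloor$ and contradicting $v_r<v_{r+1}$. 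So $m_c\le n$ for every $c$, and since $\sum_c m_c=kn$ over $k$ letters, all $m_c=n$. Now $F=0^n1^n\cdots(k-1)^n$, $S_c=cn$, and $(\star)$ says $\pi_w^{-1}$ carries the residue class $\{cn+r:0\le c<k\}$ of $r$ modulo $n$ bijectively onto the length-$k$ interval $\{rk,\dots,rk+k-1\}$; using $w[\pi_w^{-1}(cn+r)]=F[cn+r]=c$, this interval of $w$ contains each letter of $\Sigma_k$ exactly once. Since this holds for every $r\in\{0,\dots,n-1\}$, $w$ is an alphabet-permutation power, i.e.\ $u$ is a generalized de Bruijn word.

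The main obstacle I anticipate is exactly the ``$(\star)\Rightarrow$ uniform Parikh vector'' step above: it is the only part of the argument that is not mere bookkeeping, and it genuinely uses $n\ge2$ (for $n=1$ the graph $\DB(k,k)$ is complete and $(\star)$ is vacuous, so that boundary case is handled separately). The remaining care is in carefully juggling the two dual block decompositions — $F$ into $k$ runs of length $n$ versus $\BWT(u)$ into $n$ blocks of length $k$ — and in confirming that the FL-mapping interchanges a residue class modulo $n$ with a length-$k$ interval precisely under the arc condition $(\star)$.
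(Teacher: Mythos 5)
The paper does not actually prove this statement: it is imported verbatim from~\cite{fici2025generalizedbruijnwordsinvertible} and used as a black box, so there is no in-paper proof to compare yours against. On its own merits, your argument is correct and complete for $n\ge 2$: the reduction of the arc relation of $\DB(k,kn)$ to the single congruence $\lfloor v/k\rfloor\equiv m\pmod n$ is right (and consistent with Lemma~\ref{lem:line graph} and with the red path in Figure~\ref{fig:permutation-to-graph}, where each step $m\to\sigma(m)$ of $\pi^{-1}_{\BWT(u)}$ is indeed an arc), the forward direction is routine bookkeeping as you say, and your ``$(\star)\Rightarrow$ uniform Parikh vector'' step --- comparing the monotone sequence $\lfloor v_r/k\rfloor$ with the eventually wrapping sequence $(S_c+r)\bmod n$ --- is the genuine content and is sound. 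The identification of the blocks $\{rk,\dots,rk+k-1\}$ as permutations of $\Sigma_k$ via $w[\pi_w^{-1}(cn+r)]=c$ then closes the equivalence.

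The one point you should not wave away is the boundary case $n=1$. You correctly observe that $(\star)$ becomes vacuous and $\DB(k,k)$ is complete, but then the right-hand side of the equivalence degenerates to ``$u$ is aperiodic,'' which is strictly weaker than ``$\BWT(u)$ is an alphabet-permutation'': for instance $u=[\str{001}]$ over $\Sigma_3$ is aperiodic, $\pi^{-1}_{\BWT(u)}$ is the $3$-cycle $(0,1,2)$, which is a Hamiltonian cycle of the complete graph $\DB(3,3)$, yet $\BWT(u)=\str{100}$ is not an alphabet-permutation. So under the natural reading of ``corresponds to a Hamiltonian path'' (the one your proof and the paper's figures both use), the stated equivalence actually fails at $n=1$; it cannot be ``handled separately'' by the same correspondence. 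This is arguably an imprecision in the cited statement rather than a flaw in your method, but you should either restrict to $n\ge 2$ explicitly or adopt a stricter notion of ``corresponds to'' for which the degenerate case survives. For the way the theorem is used in this paper (only through Lemma~\ref{lem:untie} and Theorem~\ref{thm:unclusteredDBW}, always with $k\ge 3$ and with reroutings that need $|A|,|B|\ge 3$), nothing is harmed, but a proof claiming the full statement must confront this case honestly.
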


The next example shows a generalized de Bruijn word whose BWT is a completely unclustered alphabet-permutation power, providing also an instance of the statement of  Theorem \ref{thm:dbw-graph}.

\begin{example}\label{ex:stand-perm}
The necklace $u=[220011021002211201]$ is a generalized de Bruijn word of length $3\cdot 6=18$, since $BWT(u)=210210210210210210$, which is an alphabet-permutation power. Moreover, $\rho(u)=18=|u|$.
Note that the standard permutation $\pi_{BWT(u)}$ is a cycle of length $18$, 
$$\pi_{BWT(u)}=(0, 12, 16,11,3,13,10,9,15,17,5,1,6,14,4,7,8,2).$$

The inverse permutation is: 
$$\pi_{BWT(u)}^{-1} = (0,2,8,7,4,14,6,1,5,17,15,9,10,13,3,11,16,12)$$

\end{example}

\begin{figure}[htp]
  \centering
\includegraphics[width=\textwidth]{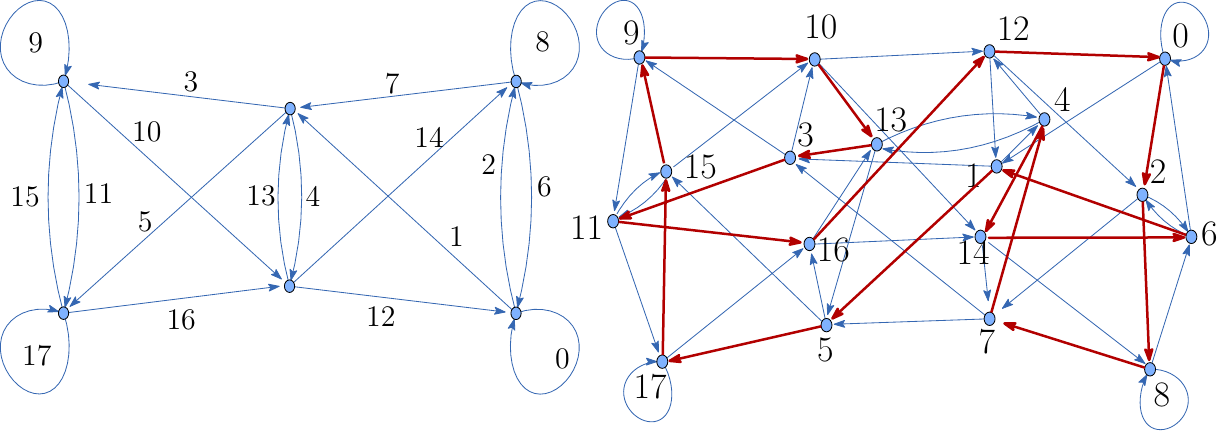}
  \caption{The graph $\DB(3,6)$ (left) with edge labels, and the graph $\mathcal{L} \DB(3,6)=\DB(3,18)$ (right), with node labels. The Hamiltonian path corresponding to $\pi^{-1}_{\BWT(u)}$ (Example~\ref{ex:stand-perm}) is highlighted in red on $\DB(3,18)$. It corresponds to an Eulerian path on $\DB(3,6)$. Notice that the edges of the form $(i,3i+j)\mod 6$ in $\DB(3,6)$ (where the nodes are labeled as in Figure~\ref{fig:DB(3,6)}) are labeled $3i+j\mod 18$ (Lemma~\ref{lem:line graph}).}
  \label{fig:permutation-to-graph}
\end{figure}

\section{Main result}
In this section, we establish the existence of necklaces with completely unclustered BWTs over alphabets of size at least three. Our approach is based on generalized de Brujin words and their representation as Hamiltonian paths in generalized de Brujin graphs. We first show how to construct such necklaces of length that is a multiple of the alphabet size, and then how to extend the construction to arbitrary lengths.

\subsection{Constructing completely unclustered BWTs from generalized de Bruijn words}

In this section, we prove that for every integer $n>0$, and $k\ge 3$, there exists a generalized de Bruijn word of length $\ell=kn$ over $\Sigma_k$ with completely unclustered \BWT. 

The key insight is that adjacent equal letters between consecutive alphabet-permutation blocks can be systematically eliminated by rerouting Hamiltonian cycles in the corresponding de Brujin graph. This yields generalized de Brujin words with completely unclustered \BWT.

Let $u$ be a generalized de Bruijn word, and let us write $w=\BWT(u)$. By definition, $w$ is an alphabet-permutation power, hence a sequence of length-$k$ blocks that are permutations of $\Sigma_k$. Clearly, a repeated letter can occur only at the boundary between two successive such blocks. In other words, $u$ has a completely unclustered \BWT if and only if $w_{ki+(k-1)}\neq w_{k(i+1)}$ for every $0\le i\le n-1$. On the contrary, if $w_{ki+(k-1)}= w_{k(i+1)}$, we say that $w$ has a \emph{tie} at block $i$. Ties for $w$ can also be characterized by the inverse standard permutation of $w$, as shown in the following lemma: 

\begin{lemma}\label{lem:tie charac}
Let $u$ $\in \Sigma_k^*$ be a generalized de Bruijn word of length $kn$, $w=\BWT(u)$, and $\sigma=\pi_w^{-1}$.
The word $w$ has a tie at block $i$ if and only if for some $j\in \{0,\dots, k-1\}$, one has:

\begin{equation}
    \begin{aligned}
  & \sigma(jn+i)=ki+k-1\\
&\sigma(jn+i+1)=k(i+1)
\end{aligned}
\end{equation}
\end{lemma}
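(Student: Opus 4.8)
The plan is to unfold the two relevant definitions — what a ``tie at block $i$'' means in terms of the BWT word $w$, and how the inverse standard permutation $\sigma = \pi_w^{-1}$ relates positions in $w$ — and then match them up. Recall that by definition $w$ has a tie at block $i$ exactly when $w_{ki+(k-1)} = w_{k(i+1)}$, i.e.\ the last letter of block $i$ equals the first letter of block $i+1$. Call this common letter $j \in \{0,\dots,k-1\}$. Since $w = \BWT(u)$ is an alphabet-permutation power and each of the $n$ blocks contains every letter of $\Sigma_k$ exactly once, the positions in $w$ carrying the letter $j$ are precisely one per block; listing them in left-to-right order gives positions whose images under $\sigma = \pi_w^{-1}$ are, by the very description of the inverse standard permutation recalled in the Preliminaries (``list in left-to-right order the positions of $0$, then the positions of $1$, and so on''), the consecutive integers $jn, jn+1, \dots, jn+(n-1)$. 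Concretely, if the occurrence of $j$ inside block $i$ sits at position $p_i$ in $w$, then $\sigma(jn+i) = p_i$.

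The first step, then, is to translate the condition $w_{ki+(k-1)} = j$ into a statement about $\sigma$: it says that the position $ki+(k-1)$ (the last slot of block $i$) is the occurrence of $j$ in block $i$, which by the indexing just described is equivalent to $\sigma(jn+i) = ki+(k-1)$. The second step is the symmetric one for the first slot of block $i+1$: the condition $w_{k(i+1)} = j$ says that position $k(i+1)$ is the occurrence of $j$ in block $i+1$, equivalently $\sigma(jn+(i+1)) = k(i+1)$. Combining, a tie at block $i$ with common letter $j$ holds iff both displayed equations hold, and since the tie condition is ``there exists such a common letter $j$'', the quantifier $\exists j \in \{0,\dots,k-1\}$ in the statement appears exactly as claimed. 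One should also check the easy boundary case of the last block (indices taken so that block $n$ wraps to block $0$, using the cyclic/necklace structure), but this is purely a matter of reading indices modulo $n$ and carries no real content.

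I do not expect a genuine obstacle here — the lemma is essentially a bookkeeping identity. The one point that needs care is making the correspondence ``$i$-th occurrence of letter $j$, read left to right in $w$'' $\leftrightarrow$ ``integer $jn+i$'' fully precise, because it relies on the fact that $w$, being an alphabet-permutation power of $n$ blocks, has its $n$ occurrences of $j$ distributed one per block \emph{and in block order} — so that the occurrence in block $i$ is the $(i{+}1)$-th occurrence overall. Once that observation is nailed down, the proof is just substituting it into the definitions of the tie and of $\pi_w^{-1}$.
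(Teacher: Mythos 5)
Your proposal is correct and follows essentially the same route as the paper's proof: both unfold the description of $\pi_w^{-1}$ (positions of letter $j$ listed in occurrence order occupy the index range $jn,\dots,jn+n-1$) and use the fact that an alphabet-permutation power has exactly one occurrence of each letter per block, in block order, so that the occurrence of $j$ in block $i$ is the $(i{+}1)$-th overall, giving $\sigma(jn+i)=ki+k-1$ and $\sigma(jn+i+1)=k(i+1)$. No substantive difference.
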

\begin{proof}
    Let $u$ be such a necklace, and let us write $w=\BWT(u)$. 
    Consider the permutation $\sigma=\pi^{-1}_w$. It can be understood as follows: writing the multiset of letters of $w$ in lexicographic order, where equal letters are sorted by occurrence order in $w$, the position in $w$ of the $m$-th letter is $\sigma(m)$. In particular, since the Parikh vector of $w$ is $(n,\dots,n)$, the positions of $\str{0}$ in $w$ are $\sigma(0),\dots, \sigma(n-1)$, the positions of $\str{1}$ are $\sigma(n),\dots, \sigma(2n-1)$, and so on, until the last letter, which occurs at positions $\sigma((k-1)n),\dots, \sigma(kn-1)$. In particular, there is a tie in $w$ at block $i$ if and only if, for some $j\in\{0,\ldots,k-1\}$ the $j$th letter of $\Sigma_k$ occurs at positions $ki+k-1$ and $k(i+1)$, and in that case they are respectively the $(i+1)$-th and the $i+2$-th occurrences of this letter. In other words, we have $\sigma(jn+i)=ki+k-1$ and $\sigma(jn+i+1)=k(i+1)$, as claimed.
\end{proof}

The following lemma provides conditions under which Hamiltonian paths can be rerouted to avoid a specified edge. We will use it to break ties in BWTs of generalized de Bruijn words by rerouting the corresponding Hamiltonian paths so that they avoid one of the edges responsible for the tie.

\begin{lemma}\label{lem:Hamiltonian-rerouting}
Let $G$ be a directed Hamiltonian graph, and $A,B\subseteq V(G)$ such that: \begin{enumerate}
    \item both $A$ and $B$ have cardinality at least $3$;
    \item every pair $(x,y)\in A\times B$ is an edge of $G$;
    \item every edge starting at a node in $A$ ends at a node in $B$.
\end{enumerate}

Then, if $\gamma$ is a Hamiltonian cycle for $G$ visiting an edge $e=(x,y)\in A\times B$, it can be rerouted into a Hamiltonian cycle $\gamma'$ such that:

\begin{enumerate}
    \item $\gamma'$ does not visit $e$;
    \item if $e'\in E(G)\setminus (A \times B)$, $\gamma'$ visits $e'$ if and only if $\gamma$ visits $e'$.%
\end{enumerate}
\end{lemma}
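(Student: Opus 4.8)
The plan is to perform a local surgery on $\gamma$ that swaps the roles of the edge $e=(x,y)$ with two other edges of the complete bipartite "gadget" $A\times B$, so that globally the only change is the set of edges used inside $A\times B$. First I would set up notation: since $\gamma$ is a Hamiltonian cycle, each vertex has exactly one incoming and one outgoing edge in $\gamma$. In particular $x$'s outgoing edge in $\gamma$ is $e=(x,y)$. Now I would locate, along the cyclic order of $\gamma$, the next vertex of $A$ that $\gamma$ visits after $x$; call it $x'$ (possibly $x'$ could equal $x$ if $A\cap\gamma$ between consecutive visits... but $\gamma$ is a cycle and $|A|\ge 3$, so there are at least three distinct vertices of $A$ on $\gamma$, hence $x'\neq x$). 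By hypothesis~(3), every edge leaving $A$ lands in $B$, so the successor of $x'$ on $\gamma$ is some $y'\in B$; write $e'=(x',y')$ for this edge of $\gamma$. Note $y'\neq y$ because $\gamma$ is a cycle and $y$ already has its unique predecessor $x$ on $\gamma$, while $y'$ has predecessor $x'\neq x$.

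The surgery is the standard "2-opt" move on a directed cycle restricted to the segment between $x$ and $x'$. Write $\gamma = x,\, y,\, P,\, x',\, y',\, Q,\, x$ where $P$ is the portion of $\gamma$ strictly between $y$ and $x'$ and $Q$ is the portion strictly between $y'$ and $x$ (either may be empty). I would then define
\begin{equation*}
\gamma' \;=\; x,\; y',\; \overline{P^{-1}\text{-piece}}\ldots
\end{equation*}
more precisely: traverse $x \to y'$, then traverse the segment from $y'$ back through... — here I need to be careful about directedness, so the cleaner description is: $\gamma'$ uses the new edges $(x,y')$ and $(x',y)$, keeps the segment $Q$ from $y'$ to $x$ reversed, i.e. actually the correct directed 2-opt is $\gamma' = x,\, y',\, (\text{reverse of } P),\, x',\, y,\, (\text{reverse of }Q)?$ No — reversing is illegal in a directed graph. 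The right move is instead: since $(x,y')$ and $(x',y)$ are both edges (hypothesis~(2), as $x,x'\in A$, $y,y'\in B$), replace the two $\gamma$-edges $(x,y)$ and $(x',y')$ by $(x,y')$ and $(x',y)$, which splices the cycle into the same single cycle on the same vertex set provided the orientation works out — and it does work out exactly in one of the two segment arrangements because $(x,y)$ and $(x',y')$ both point "from $A$ into $B$", so the segment from $y$ to $x'$ gets reconnected as the segment from $y$ (now reached from $x'$) around to $x'$... I would verify by a short case check that this yields a single Hamiltonian cycle $\gamma'$.

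Finally I would check the two conclusions. Conclusion~(1): $e=(x,y)$ is removed by construction, and it is not re-added because the only new edges are $(x,y')$ and $(x',y)$, both distinct from $(x,y)$ since $y'\neq y$ and $x'\neq x$. Conclusion~(2): the edge set of $\gamma'$ equals the edge set of $\gamma$ except that $\{(x,y),(x',y')\}$ is replaced by $\{(x,y'),(x',y)\}$; all four of these edges lie in $A\times B$ by hypothesis~(2)–(3), so for any $e''\in E(G)\setminus(A\times B)$ we have $e''\in\gamma' \iff e''\in\gamma$. Hypothesis~(1) ($|A|,|B|\ge 3$) is what guarantees the "next $A$-vertex" $x'$ exists and differs from $x$, and symmetrically ensures $y'\neq y$, so the move is non-degenerate; I would make sure to spell out exactly where the bound $3$ (rather than $2$) is needed — likely to guarantee we can pick $x'$ with $x'\neq x$ and $y'\notin\{y\}$ simultaneously even in small cases. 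The main obstacle I anticipate is getting the directed 2-opt splice right: in a directed cycle, swapping two edges only reconnects into a single cycle for one of the two possible pairings, and I must confirm that the pairing forced on us here (dictated by the common orientation "$A\to B$" of both removed edges) is the good one. This is a finite case check but it is the crux of correctness.
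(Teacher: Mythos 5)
There is a genuine gap, and it sits exactly at the point you flagged as ``the crux of correctness.'' The directed 2-opt move you propose does not work: after cutting the cycle $\gamma = x,\;y,\;P,\;x',\;y',\;Q,\;x$ at the two edges $(x,y)$ and $(x',y')$, you are left with two directed paths, one from $y$ to $x'$ and one from $y'$ to $x$. The only way to reassemble these into a \emph{single} directed cycle without reversing a segment is to join the end of each path to the start of the other, i.e.\ to re-add exactly the edges $(x',y')$ and $(x,y)$ you just removed. Your replacement pair $(x,y')$ and $(x',y)$ instead joins the end of each path to its own start, producing two disjoint cycles ($x\to y'\to Q\to x$ and $x'\to y\to P\to x'$). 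So neither pairing is ``the good one'': one restores $\gamma$, the other disconnects it. This is the standard obstruction to 2-opt in directed graphs, and no choice of $x'$, $y'$ fixes it.

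The repair is to remove \emph{three} $A\times B$ edges rather than two, which is what the paper does and is precisely why the hypothesis demands $|A|,|B|\ge 3$ (a question you raised but did not resolve). Cutting $\gamma$ at the edge $e=(x,y)$ and at two further $A\to B$ edges decomposes it as $\gamma_1\cdot\gamma_2\cdot\gamma_3$, where each $\gamma_i$ starts in $B$ and ends in $A$; the transposition $\gamma_1\cdot\gamma_3\cdot\gamma_2$ is then again a single Hamiltonian cycle (all three new junction edges lie in $A\times B$, hence exist by hypothesis~(2)), it no longer uses $e$, and it differs from $\gamma$ only on edges of $A\times B$. With only two segments the analogous permutation is a rotation of the same cycle, which is the degenerate case your 2-opt attempt runs into. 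Your bookkeeping for conclusions (1) and (2) would carry over to this 3-segment surgery essentially unchanged, but as written the construction at the heart of your proof produces a non-Hamiltonian object.
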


\begin{proof}
Let $G$ be a directed Hamiltonian graph, $A$ and $B$ two subsets of $V(G)$ satisfying the conditions of the statement, and $\gamma$ a Hamiltonian cycle visiting $e=(x,y)\in A \times B$. Since $\gamma$ is a cycle, we can write it as a sequence of nodes starting with $y$ and ending with $x$. Sets $A$ and $B$ have cardinality at least $3$, and every node from $A$ must be followed by a node from $B$. Hence, one can write $\gamma=\gamma_1\cdot \gamma_2 \cdot \gamma_3$ where each $\gamma_i$ starts with a node from $B$ and ends with a node from $A$, for $i=1,2,3$. Since each $(x,y)\in A \times B$ is an edge, the sequence $\gamma'=\gamma_1 \cdot \gamma_3 \cdot \gamma_2$ is still a Hamiltonian cycle for $G$, which does not visit $(x,y)$. Furthermore, $\gamma$ and $\gamma'$ differ only on edges in $A\times B$, and the statement follows.
\end{proof}

\begin{remark}
Note that Lemma~\ref{lem:Hamiltonian-rerouting} does not require having $A\cap B = \emptyset$. 
\end{remark}

\begin{remark}
    If $G$ is the line graph of an Eulerian graph $\tilde{G}$, Lemma~\ref{lem:Hamiltonian-rerouting} has a simpler interpretation: the set $A$ (resp.~$B$) corresponds to the set of all incoming (resp.~all outgoing) edges for a fixed node $v$ having indegree and outdegree at least $3$, and the edge $e$ to be avoided corresponds to a sequence of two edges $(x,v)$ and $(v,y)$. In that case, the cycle $\gamma$ corresponds to an Eulerian cycle on $\tilde{G}$, which by definition must visit $v$ at least $3$ times, and can be written as a concatenation of $3$ cycles starting and ending at $v$. In that case, the rerouting performed simply corresponds to a permutation in the order in which those cycles are traversed. We gave the Hamiltonian version of the result because the correspondence of Theorem~\ref{thm:dbw-graph} is more direct in the Hamiltonian viewpoint, but all related statements could be reformulated in terms of Eulerian graphs.
\end{remark}

\begin{lemma}\label{lem:untie}
Let $u$ be a generalized de Bruijn word of length $kn$ over $\Sigma_k$, with $k\ge 3$, and $w=\BWT(u)$. Assume that $i_0$ is the rightmost (resp.~leftmost) tie for $w$. Then, we can construct a word $w'$ such that: (i) $w'=\BWT(u')$ for some generalized de Bruijn word $u'$; (ii) $w'_i=w_i$ for every $i$ outside of block $i_0$; and (iii) $w'$ has no tie at block $i$ for every $i\ge i_0$ (resp.~for every $i\le i_0$).
\end{lemma}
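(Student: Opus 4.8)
The plan is to pass to the de~Bruijn graph picture, interpret a tie as the Eulerian circuit $\pi_w^{-1}$ using a specific edge at one vertex, and kill that tie by a single-vertex rerouting of the circuit.

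\textbf{Step 1 (graph reformulation).} By Theorem~\ref{thm:dbw-graph}, $\sigma:=\pi_w^{-1}$, read as a cyclic sequence, is a Hamiltonian cycle of $\DB(k,kn)$, and by Lemma~\ref{lem:line graph} this graph is $\mathcal L\DB(k,n)$, so $\sigma$ is an Eulerian circuit of $\DB(k,n)$, which is $k$-in/$k$-out-regular. Under the labelling of Lemma~\ref{lem:line graph}, the in-edges of a vertex $i_0$ of $\DB(k,n)$ are exactly the $\DB(k,kn)$-nodes $\{jn+i_0:0\le j<k\}$, and its out-edges are $\{ki_0,ki_0+1,\dots,ki_0+k-1\}$, i.e.\ the positions of block $i_0$ of $w$. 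Since $w_q=\lfloor\sigma^{-1}(q)/n\rfloor$, the letter $w_{ki_0+j}$ is the index $j'$ of the in-edge $j'n+i_0$ that $\sigma$ traverses just before the out-edge $ki_0+j$.

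\textbf{Step 2 (one rerouting).} Set $A=\{jn+i_0:0\le j<k\}$ and $B=\{ki_0+j:0\le j<k\}$. Then $|A|=|B|=k\ge3$; every pair in $A\times B$ is an edge of $\DB(k,kn)$ (an in-edge of $i_0$ followed by an out-edge of $i_0$); and every $\DB(k,kn)$-edge leaving a node of $A$ ends in $B$ (the head of an in-edge of $i_0$ is $i_0$). So the hypotheses of Lemma~\ref{lem:Hamiltonian-rerouting} hold. By Lemma~\ref{lem:tie charac} and the definition of a tie, the tie at $i_0$ yields $j_\ast:=w_{ki_0+k-1}=w_{k(i_0+1)}$ with $\sigma(j_\ast n+i_0)=ki_0+k-1$, so $\sigma$ visits $e:=(j_\ast n+i_0,\,ki_0+k-1)\in A\times B$. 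Apply Lemma~\ref{lem:Hamiltonian-rerouting} to obtain a Hamiltonian cycle $\sigma'$ that avoids $e$ and agrees with $\sigma$ on every edge outside $A\times B$.

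\textbf{Step 3 (recovering $w'$).} Because $(p,\sigma(p))\in A\times B$ exactly when $p\in A$, the permutations $\sigma$ and $\sigma'$ agree except at the single position $\ell n+i_0$ of each block $[\ell n,(\ell+1)n)$, with $\sigma'(\ell n+i_0)\in[ki_0,k(i_0+1))$. Here $\sigma(\ell n+i_0-1)$ is an out-edge of vertex $i_0-1$, hence $<ki_0$, and $\sigma(\ell n+i_0+1)$ is an out-edge of vertex $i_0+1$, hence $\ge k(i_0+1)$ (with the obvious shortening when $i_0\in\{0,n-1\}$); so $\sigma'$ is still increasing on each block $[\ell n,(\ell+1)n)$, which is precisely the condition for $\sigma'=\pi_{w'}^{-1}$ for a unique word $w'$ with Parikh vector $(n,\dots,n)$. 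Since $\sigma^{-1}$ and $(\sigma')^{-1}$ differ only on $B$, one gets $w'_q=w_q$ for all $q\notin B$, which is (ii); and block $i_0$ of $w'$, whose letters are read off the bijection $(\sigma')^{-1}\colon B\to A$, is again a permutation of $\Sigma_k$, so $w'$ is an alphabet-permutation power. As $\sigma'=\pi_{w'}^{-1}$ is a single $kn$-cycle, Lemma~\ref{lem:bwt-cycle} gives $w'=\BWT(u')$ for an aperiodic necklace $u'$, which is therefore a generalized de~Bruijn word; this is (i).

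\textbf{Step 4 (no ties at $i\ge i_0$, and the obstacle).} For $i>i_0$ both positions $ki+k-1$ and $k(i+1)$ lie outside block $i_0$, so $w'$ ties at $i$ iff $w$ does, and $w$ does not, by maximality of $i_0$. At $i_0$ itself, $w'_{ki_0+k-1}$ is the index of the new $\sigma'$-predecessor of $ki_0+k-1$, which differs from $j_\ast=w'_{k(i_0+1)}$ exactly because $\sigma'$ avoids $e$; hence no tie at $i_0$. This settles (iii) in the rightmost case. The leftmost case runs the same argument from the left, the one extra requirement being that no tie is created at block $i_0-1$, i.e.\ $w'_{ki_0}\ne w_{ki_0-1}$; I expect this to be the only genuinely delicate point. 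For $k\ge4$ it is free: in Lemma~\ref{lem:Hamiltonian-rerouting} one may place the two auxiliary cut-points among the $\ge k-2\ge2$ out-edges of $i_0$ other than $ki_0$ and $ki_0+k-1$, so $ki_0$ keeps its original (tie-free) predecessor. For $k=3$ the cut is forced to involve $ki_0$, and one finishes by inspecting the unique nontrivial re-gluing through a short case analysis on the cyclic order of the three passes through $i_0$. (For Theorem~\ref{thm:unclusteredDBW} only the rightmost version is needed anyway, since each application strictly lowers the rightmost tie.)
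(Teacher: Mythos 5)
Your treatment of the rightmost case is correct and is essentially the paper's own argument: reroute the Hamiltonian cycle of $\DB(k,kn)$ with $A=\{jn+i_0:0\le j<k\}$ and $B=\{ki_0,\dots,ki_0+k-1\}$ so as to avoid the edge $(j_\ast n+i_0,\;ki_0+k-1)$, which changes $w$ only on block $i_0$ and kills the tie there while leaving all blocks $i>i_0$ untouched. The extra verification in your Step~3 that $\sigma'$ remains increasing on each block $[\ell n,(\ell+1)n)$ is a harmless (and correct) re-derivation of the fact, already asserted in the paper, that every Hamiltonian cycle of $\DB(k,kn)$ arises as $\pi^{-1}_{\BWT(u')}$ for some generalized de Bruijn word $u'$.

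The gap is in the leftmost case. You again cut at block $i_0$; as you note, this changes $w_{ki_0}$ and can create a fresh tie at block $i_0-1$, defeating claim (iii). Your proposed repair needs a strengthening of Lemma~\ref{lem:Hamiltonian-rerouting} (control over where the two auxiliary cut points fall), which you only make work for $k\ge 4$, and for $k=3$ you defer to an unperformed ``short case analysis''---but $k=3$ is exactly the case the rest of the paper relies on (Lemma~\ref{lem:positionof2} and the extension to all lengths), so this is a real hole. The paper sidesteps the whole difficulty with a trick you missed: Lemma~\ref{lem:tie charac} exhibits \emph{two} consecutive edges witnessing the tie, $(j_0n+i_0,\;ki_0+k-1)$ and $(j_0n+i_0+1,\;k(i_0+1))$. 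For the leftmost tie one applies the rerouting with the sets $A_{i_0+1}$ and $B_{i_0+1}$ so as to avoid the \emph{second} edge, i.e.\ to force $\sigma'(j_0n+i_0+1)\neq k(i_0+1)$. This modifies only block $i_0+1$, hence breaks the tie at $i_0$ while leaving every position $\le ki_0+k-1$ unchanged, so no tie can appear at any block $\le i_0$ and no case analysis is needed. (With this choice the altered block is $i_0+1$ rather than $i_0$, which is how condition (ii) must be read in the leftmost case.)
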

\begin{proof}
We start with an arbitrary generalized de Bruin word $u$ of length $\ell=kn$ over $\Sigma_k$. Let $w=\BWT(u)$. Let $i_0$ be the rightmost tie for $w$. One has $w_{ki_0+k-1}= w_{k(i_0+1)}$, and from Lemma~\ref{lem:tie charac}, we can fix $j_0\in\{0,\dots k-1\}$ such that $\sigma(j_0n+i_0)=ki_0+k-1$ and
$\sigma(j_0n+i_0+1)=k(i_0+1)$, where $j_0$ depends on the letter $w_{ki_0+k-1}= w_{k(i_0+1)}$.

From Lemma~\ref{lem:bwt-cycle}, the permutation $\sigma=\pi^{-1}_w$ is a cycle, and by Theorem~\ref{thm:dbw-graph}, this cycle corresponds to the consecutive node labels of a Hamiltonian cycle of $G=\DB(k,kn)$. Consider the sets $A_{i_0}=\{i_0,n+i_0,\dots kn+i_0\}\subseteq V(G)$, $B_{i_0}=\{ki_0,ki_0+1,\dots ki_0+k-1\}\subseteq V(G)$. Note that the edge $(j_0n+i_0+1,k(i_0+1))$ is visited by the cycle corresponding to $\sigma$ and that it is an edge from $A_{i_0}$ to $B_{i_0}$. Those sets both have cardinality $k\ge 3$, and by construction of $G$, they satisfy the conditions of Lemma~\ref{lem:Hamiltonian-rerouting}. We can then obtain a permutation $\sigma'$, also corresponding to an Hamiltonian cycle of $G$, with $\sigma'(j_0n+i_0)\neq ki_0+k-1$, hence $w'$ has no tie at $i_0$, and $\sigma'(m)=\sigma(m)$ for every $m\not\in A_{i_0}$, hence the words $w$ and $w'$ differ only at block $i_0$ (namely, at positions $ki_0\dots ki_0+k-1$). Since $i_0$ was the rightmost tie in $w$, we deduce that $w'$ has no tie at block $i\ge i_0$. %

If $i_0$ is the leftmost tie for $w$, then we can apply Lemma~\ref{lem:Hamiltonian-rerouting} with sets $A_{i_0+1}$ and $B_{i_0+1}$ to obtain a permutation $\sigma'$ such that $\sigma'(j_0n+i_0+1)\neq k(i_0+1)$, hence breaking the tie at block $i_0$ by modifying only block $i_0+1$, and obtaining a word $w'$ that is the \BWT of a generalized de Bruijn word and has no tie at block $i\le i_0$. %
\end{proof}

We can now prove our main result:

\unclusteredDBW

\begin{proof}
By Theorem~\ref{thm:dbw-graph}, generalized de Bruijn words are in bijection with Hamiltonian cycles of $\DB(k,kn)$, or equivalently, with Eulerian cycles of $\DB(k,n)$. Since $\DB(k,n)$ is Eulerian for every $k,n$, there is at least one generalized de Bruijn word $u$ of length $kn$ over $\Sigma_k$. If $\BWT(u)$ has no tie, then $u$  is a generalized de Bruijn with completely unclustered \BWT. Otherwise, by repeating the process of Lemma~\ref{lem:untie} on $w=\BWT(u)$, we eventually obtain, by induction, a string $w'$ with no ties, which is then the \BWT of some generalized de Bruijn word $u'$ that has a completely unclustered \BWT. 
\end{proof}

\begin{example}\label{ex:untie}
Let $u=[\str{001022112}]$. One has $w=\BWT(u)=\str{201021120}$, hence $u$ is a generalized de Bruijn word with $k=3$ and $n=3$. Since $w$ has a tie at block $i=1$, $u$ does not have a completely unclustered \BWT. One has $\sigma=\pi^{-1}_w=(0, 1, 3, 2, 8, 7, 4, 5, 6)$, %
and $\sigma(4)=\sigma(n+i)=5=k\cdot i+2$, and $\sigma(5)=\sigma(n+i+1)=6=k\cdot (i+1)$. We have $A_1=\{i,n+i,2n+i\}=\{1,4,7\}$, $B_1=\{3i,3i+1,3i+2\}=\{3,4,5\}$, and we write $\gamma=560132874=\gamma_1\cdot \gamma_2 \cdot \gamma_3$, with $\gamma_1=5601$, $\gamma_2=3287$, $\gamma_3=4$. Then, $\gamma'=\gamma_1 \cdot \gamma _3 \cdot \gamma_2=560143287$ also corresponds to a cycle on $\DB(3,9)$, namely $\sigma'=(0,1,4,3,2,8,7,5,6)$, and $\sigma'=\pi^{-1}_{w'}$, with $w'=\str{201102120}=\BWT(u')$, $u'=[\str{001102212}]$. The string $w'$ still has one tie---observe, however, that the tie is now at block $i'=0$, and that there are no ties at later blocks. 

One has $\sigma'(3)=\sigma'(n+0)=2=k\cdot i+2$, and $\sigma'(4)=\sigma(n+1)=3=k\cdot (i+1)$. We have $A_0=\{0,3,6\}$, $B_0=\{0,1,2\}$, and we write $\gamma^{(2)}=287560143=\gamma_1^{(2)}\cdot \gamma_2^{(2)}\cdot \gamma_3^{(2)}$ with $\gamma_1^{(2)}=28756$, $\gamma_2^{(2)}=0$ $\gamma_3^{(2)}=143$. Then, $\gamma'^{(2)}=\gamma_1^{(2)}\cdot \gamma_3^{(2)} \cdot \gamma_2^{(2)}=287561430$ corresponds to a cycle of $\DB(3,9)$, namely $\sigma'^{(2)}=(0,2,8,7,5,6,1,4,3)$, and $\sigma'^{(2)}=\pi^{-1}_{w'^{(2)}}$, with $w'^{(2)}=\str{120102120}=\BWT(u'^{(2)})$, $u'^{(2)}=[\str{002212011}]$. We finally get the word $u^*=u'^{(2)}$, which is a de Bruijn word with completely unclustered \BWT.
\end{example}

\begin{figure}[htp]
  \centering

  \begin{subfigure}[b]{0.3\textwidth}
    \includegraphics[page=1, width=\linewidth]{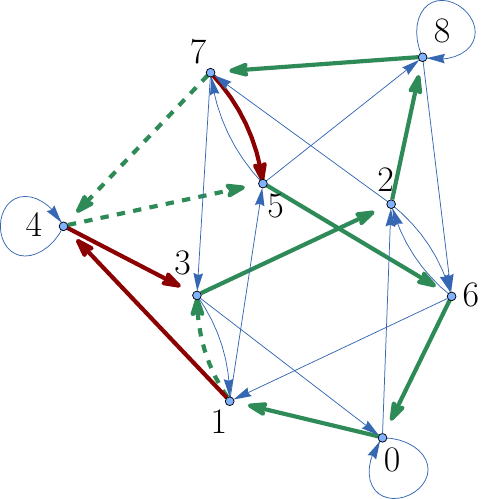}
  \end{subfigure}
  \hfill
  \begin{subfigure}[b]{0.3\textwidth}
    \includegraphics[page=2, width=\linewidth]{figures/figure-dbg.pdf}
  \end{subfigure}
  \hfill
  \begin{subfigure}[b]{0.3\textwidth}
    \includegraphics[page=3, width=\linewidth]{figures/figure-dbg.pdf}
  \end{subfigure}

  \caption{Illustration of the process on the de Bruijn graph $\DB(3,9)$, as described in Example~\ref{ex:untie}.  
In the left subfigure, the green cycle corresponds to the initial permutation $\sigma$, and the dashed green edges (resp.~red edges) indicate the edges to remove (resp.~to add) to obtain the updated permutation $\sigma'$.  
In the middle subfigure, the green cycle corresponds to $\sigma' = \sigma^{(2)}$, and the dashed green edges (resp.~red edges) indicate the edges to remove (resp.~to add) to obtain $\sigma'^{(2)}$.  
The right subfigure shows the final cycle $\sigma'^{(2)}$ in green, which corresponds to a generalized de Bruijn word with a completely unclustered \BWT.}
  \label{fig:three-subfigs}
\end{figure}

\subsection{Extending to all lengths}

After establishing the existence of completely unclustered BWTs for generalized de Brujin words of length $kn$, we extend the construction to arbitrary word lengths in the ternary case. In fact, we show how to insert or delete single letters while preserving the cyclic structure of the standard permutation. Finally, we extend the result to larger alphabets.

\begin{lemma}\label{lem:positionof2}
    For every integer $n>0$, there exists a necklace $u$ of length $n$ over $\Sigma_3$ having completely unclustered \BWT, written $w$, and such that $w_{3n-3}=\str{2}$ (resp.~such that $w_{3n-2}=\str{2}$).
\end{lemma}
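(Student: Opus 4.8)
The plan is to derive the statement directly from Theorem~\ref{thm:unclusteredDBW}, using one structural fact about the vertex $3n-1$ of $\DB(3,3n)$, and — for one of the two required words — a single application of the rerouting Lemma~\ref{lem:Hamiltonian-rerouting} followed by a clean-up via Lemma~\ref{lem:untie}.

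First I would take, by Theorem~\ref{thm:unclusteredDBW}, a generalized de Bruijn word $u_0$ of length $3n$ over $\Sigma_3$ whose BWT $w_0$ is completely unclustered, and set $\sigma_0=\pi_{w_0}^{-1}$. Since $u_0$ is aperiodic, $\sigma_0$ is a single $3n$-cycle (Lemma~\ref{lem:bwt-cycle}), and by Theorem~\ref{thm:dbw-graph} this cycle is the vertex sequence of a Hamiltonian cycle of $\DB(3,3n)$. The key fact is that in $\DB(3,3n)$ the out-neighborhood of the vertex $3n-1$ is exactly $\{3n-3,\,3n-2,\,3n-1\}$, since $3(3n-1)+i\equiv 3n-3+i\pmod{3n}$ (equivalently, by Lemma~\ref{lem:line graph}, $3n-1$ is the loop at vertex $n-1$ of $\DB(3,n)$, which an Eulerian circuit cannot run through twice in a row). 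As $\sigma_0$ is a $3n$-cycle it has no fixed point, so $\sigma_0(3n-1)\in\{3n-3,\,3n-2\}$. Now $\sigma_0$ sends $2n+m$ to the position of the unique $\str{2}$ inside the $m$-th block of the alphabet-permutation power $w_0$, and $3n-1=2n+(n-1)$; hence if $\sigma_0(3n-1)=3n-3$ then $(w_0)_{3n-3}=\str{2}$, and if $\sigma_0(3n-1)=3n-2$ then $(w_0)_{3n-2}=\str{2}$. Either way, $u_0$ is already one of the two necklaces we need.

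To produce the other one I would reroute $w_0$ at its last block. Put $A=\{n-1,\,2n-1,\,3n-1\}$ and $B=\{3n-3,\,3n-2,\,3n-1\}$: both have size $3$, every pair in $A\times B$ is an edge of $\DB(3,3n)$, and every edge leaving $A$ lands in $B$ (same congruence as above). So Lemma~\ref{lem:Hamiltonian-rerouting} applies to the edge $(3n-1,\sigma_0(3n-1))$ and gives a Hamiltonian cycle whose permutation $\sigma'$ is again a single $3n$-cycle, agrees with $\sigma_0$ outside $A$, and satisfies $\sigma'(3n-1)\neq\sigma_0(3n-1)$. Being also different from $3n-1$ and still contained in $B$, $\sigma'(3n-1)$ equals the remaining element of $\{3n-3,\,3n-2\}$. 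By Theorem~\ref{thm:dbw-graph}, $\sigma'=\pi_{w'}^{-1}$ for the BWT $w'$ of some generalized de Bruijn word $u'$; moreover $w'$ coincides with $w_0$ outside its last block, and the $\str{2}$ of that last block now sits at the position we want.

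What is left is that $w'$ need not be completely unclustered. Since $w_0$ was, and $w'$ agrees with it outside the last block — which, being a permutation of $\Sigma_3$, has no internal tie and no block to its right — the only possible tie of $w'$ is at block $n-2$. If there is one, I would apply Lemma~\ref{lem:untie} repeatedly to the current rightmost tie: each application edits only that block and leaves no tie at any block to its right, so the index of the rightmost tie strictly decreases and the process halts at a completely unclustered word $w''$ that is still the BWT of a generalized de Bruijn word. Because this clean-up only ever touches blocks of index $\le n-2$, the last block is untouched, so $w''$ still has the desired $\str{2}$ and is the word we want. The one point that needs care is exactly this last paragraph — checking that rerouting the last block cannot create a tie to the right of block $n-2$, and that the untying cascade terminates without ever returning to block $n-1$; the rest is straightforward de Bruijn graph bookkeeping, and no small values of $n$ need special handling since $\sigma_0(3n-1)\in\{3n-3,3n-2\}$ holds for every $n\ge 1$.
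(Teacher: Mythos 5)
Your proposal is correct and follows essentially the same route as the paper's proof: both rest on the observation that the out-neighborhood of vertex $3n-1$ in $\DB(3,3n)$ is $\{3n-3,3n-2,3n-1\}$ with the self-loop unusable, both apply Lemma~\ref{lem:Hamiltonian-rerouting} with $A=\{n-1,2n-1,3n-1\}$ and $B=\{3n-3,3n-2,3n-1\}$ to move the $\str{2}$ of the last block, and both finish by iterating Lemma~\ref{lem:untie} on the rightmost tie while noting that this never touches the final block. The only (immaterial) difference is that you start from an already unclustered word via Theorem~\ref{thm:unclusteredDBW}, whereas the paper starts from an arbitrary generalized de Bruijn word and cleans up both $w$ and $w'$ at the end.
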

\begin{proof}
Let $u$ be a ternary generalized de Bruijn word of length $3n$, and $w=\BWT(u)$. Let $\sigma=\pi^{-1}_w$ be its inverse standard permutation, corresponding to a Hamiltonian path $\gamma$ of $\DB(k,3k)$. Let $A_n=\{n-1,2n-1,3n-1\}$ and $B_n=\{3n-3,3n-2,3n-1\}$, and write $e_1$ for the edge $(3n-1,3n-2)$ and $e_2=(3n-1,3n-3)$. Notice also that the edge $(3n-1,3n-1)$ is visited by no Hamiltonian path, since it is a self-loop. Now, $\gamma$ must visit one edge among $e_1$ and $e_2$: in particular, it visits $e_1$ if and only if $w_{3n-2}=\str{2}$, and $e_2$ if and only if $w_{3n-3}=\str{2}$. In both cases, since $A_n$ and $B_n$ satisfy the conditions of Lemma~\ref{lem:Hamiltonian-rerouting}, we can obtain a Hamiltonian path $\gamma'$ visiting $e_2$ (resp.~$e_1$) if and only if $\gamma$ visits $e_1$ (resp.~$e_2$), hence a string $w'$ which is the $\BWT$ image of a de Bruijn word and such that $w'_{3n-3}=\str{2}$ (resp.~$w'_{3n-2}=\str{2}$) if and only if $w_{3n-2}=\str{2}$ (resp.~$w_{3n-3}=\str{2}$). We now apply Lemma~\ref{lem:untie} iteratively to each of the words $w$ and $w'$, following the same procedure as in the proof of Theorem~\ref{thm:unclusteredDBW}. At each step, we eliminate the rightmost tie (at some block $j \le n - 1$) until the word becomes completely unclustered. Crucially, this process does not affect the final block of the word, so the position of the letter $\str{2}$—whether at $w_{3n-3}$ or $w_{3n-2}$—remains unchanged. Hence, in both cases, we obtain a completely unclustered word $w$ (or $w'$) with a $\str{2}$ at the prescribed position, as desired.
\end{proof}

\mainresult

\begin{proof}
    The result clearly holds for $n = 1, 2, 3$, so assume $n \ge 4$. If $n \equiv 0 \mod 3$, the conclusion follows from Theorem~\ref{thm:unclusteredDBW}.

Assume now $n \equiv 1 \mod 3$, so $n = 3n' + 1$ for some $n'$. By Lemma~\ref{lem:positionof2}, there exists a ternary generalized de Bruijn word $u$ of length $3n'$, with completely unclustered $\BWT$ $w = \BWT(u)$, such that $w_{3n'-3} = \str{2}$. Let $w'$ be the word of length $3n'+1$ obtained by inserting a $\str{2}$ between positions $3n'-2$ and $3n'-1$. This insertion is not adjacent to the existing $\str{2}$ at position $3n'-3$, so it increases the number of runs by $1$, and $w'$ remains completely unclustered.

Let us verify that \(\pi_{w'}\), the standard permutation of \(w'\), is a cycle. Since the insertion occurs after the last $\str{2}$ of \(w\), all positions before $3n'-1$ preserve their relative order: for all $i < 3n'-1$, we have $\pi_{w'}(i) = \pi_w(i)$. The inserted $\str{2}$ at position $3n'-1$ is now the new last $\str{2}$, so $\pi_{w'}(3n'-1) = 3n'$. The former position $3n'-1$ (now shifted to $3n'$) keeps its image: $\pi_{w'}(3n') = \pi_w(3n'-1)$. All subsequent values are unchanged. Thus, in one-line notation, $\pi_{w'}$ is obtained by inserting $3n'$ just after $3n'-1$ in the cycle of $\pi_w$. Since $\pi_w$ is a $(3n')$-cycle, this operation yields a $(3n'+1)$-cycle, so $w' = \BWT(u')$ for some necklace $u'$ of length $n$, which is completely unclustered.

Now assume $n \equiv 2 \mod 3$, so $n = 3n' - 1$. Again by Lemma~\ref{lem:positionof2}, there exists a generalized de Bruijn word $u$ of length $3n'$, with completely unclustered $\BWT$ $w = \BWT(u)$ such that $w_{3n'-2} = \str{2}$. Let $w'$ be the word of length $3n'-1$ obtained by deleting the $\str{2}$ at position $3n'-2$. As the last block of $w$ is a permutation of $\Sigma_3$, we have $w_{3n'-3}\neq w_{3n'-1}$, so $w'$ remains completely unclustered.

We now track the effect on the standard permutation. For all $i < 3n'-2$, the relative order remains the same, so $\pi_{w'}(i) = \pi_w(i)$. Since the removed $\str{2}$ was the last of its kind, we had $\pi_w(3n'-2) = 3n'-1$. The symbol at position $3n'-1$ in $w$ is now at $3n'-2$ in $w'$, so $\pi_{w'}(3n'-2) = \pi_w(3n'-1)$. All subsequent values are preserved. Thus, $\pi_{w'}$ is obtained by deleting $3n'-1$ from the cycle of $\pi_w$, which yields a $(3n'-1)$-cycle. Therefore, $w' = \BWT(u')$ for some necklace $u'$ of length $n$, which is completely unclustered.
\end{proof}

\begin{example}

    Let $w = \str{201021120}$ as in Example~\ref{ex:untie}. It has length $9$, and removing ties yields $\hat{w} = \str{120102120} = \BWT([\str{002212011}])$. We have $w_7 = \hat{w}_7 = \str{2}$, so removing the last $\str{2}$ in $\hat{w}$ gives $\hat{w}' = \str{12010210}$, a completely unclustered word of length $8$, which is the BWT of $\hat{u}'=[\str{00212011}]$.

    On the other hand, we can construct from $w$ the word $\overline{w} = \str{201021201} = \BWT([\str{001021122}])$, which is already completely unclustered and satisfies $\overline{w}_6 = \str{2}$. Hence, inserting a $\str{2}$ in penultimate position gives $\overline{w}' = \str{2010212021}$, a completely unclustered word of length $10$, which is the BWT of $\overline{u}' = [\str{0010211222}]$.

\end{example}

Any ternary word $w\in\Sigma_3^*$ is a special case of a word over $\Sigma_k$, with $k>3$, where only the letters $0$, $1$, and $2$ occur. 
Hence, the result of Theorem~\ref{thm:main-result} can be easily extended to any alphabet of size greater than $3$.
It is natural to wonder whether the same result can be achieved if we force $w$ to contain every distinct letter at least once.

\begin{theorem} \label{th:karyunclustered}
    For every pair of integers $n\geq k\geq 3$, there exists a necklace $u$ of length $n$ over $\Sigma_k$, with $|u|_i\ge1$ for all $i\in\{0,\ldots,k-1\}$, having completely unclustered \BWT. 
\end{theorem}

\begin{proof}
    Let $w\in\Sigma_3^n$ be any word having $n$ runs that is a BWT image of some aperiodic necklace $u$, which exists by Theorem~\ref{thm:main-result}.
    We can then define a word $w'$ as follows:
    \begin{equation*}
        w'[\pi_w(n-i)]=\begin{cases}
        k-i & \text{ if $1\leq i\leq k-3$;}\\
        2   & \text{ if $i=k-2\wedge k-3\geq|w|_2$;}\\
        1   &\text{ if $i=k-1\wedge k-3\geq|w|_2+|w|_1$;}\\
        w[\pi_w(n-i)]  & \text{ otherwise.}
    \end{cases}
    \end{equation*}
    By construction, one can observe that $\pi_w = \pi_{w'}$, i.e., $w'$ is the BWT image of some aperiodic necklace $u'$ containing at least one occurrence of each character in $\Sigma_k$. Moreover, the word $w'$ is obtained by replacing some one-letter runs in $w$ with characters that occur only once in $w'$, thus preserving the property of being completely unclustered, and the thesis follows.
\end{proof}

\begin{example}\label{ex:unclusterdkaryalph}
    Consider the necklace $u=[\str{001021122}]$ and its completely unclustered BWT $w=\str{201021201}$ with standard permutation $\pi_w=(0,6,8,5,4,7,2,3,1)$. The word $w'=\str{301041502}\in \Sigma^*_6$, which is obtained according to the method described in the proof of Theorem~\ref{th:karyunclustered}, is the BWT of the necklace $u'=[\str{001041253}]$.
    A graphical representation is shown in Figure~\ref{fig:karayunclustered}. 
\end{example}

\begin{figure}
    \centering
        \begin{tabular}{ccccccccc}
            \str{0} &  \str{0} & \str{1} & \str{0} & \bluestr{2} & \str{1} & \bluestr{1} &  \bluestr{2} & \bluestr{2}\\
            \str{0} & \str{1} & \str{0} & \bluestr{2} & \str{1} & \bluestr{1} &  \bluestr{2} &  \bluestr{2} & \str{0}\\
            \str{0} & \bluestr{2} & \str{1} & \bluestr{1} &  \bluestr{2} &  \bluestr{2} & \str{0}  & \str{0} & \str{1}\\
            \str{1} & \str{0} & \bluestr{2} & \str{1} & \bluestr{1} &  \bluestr{2} &  \bluestr{2} & \str{0}  & \str{0}\\
            \str{1} & \bluestr{1} &  \bluestr{2} &  \bluestr{2} & \str{0}  &  \str{0} & \str{1} & \str{0}  & \bluestr{2}\\
            \bluestr{1} &  \bluestr{2} &  \bluestr{2} & \str{0} & \str{0} & \str{1} & \str{0}  & \bluestr{2} & \str{1}\\
            \bluestr{2} & \str{0} & \str{0} & \str{1} & \str{0}  & \bluestr{2} & \str{1} & \bluestr{1} & \bluestr{2}\\
            \bluestr{2} & \str{1} & \bluestr{1} &  \bluestr{2} &  \bluestr{2} & \str{0} & \str{0} & \str{1} & \str{0}\\
            \bluestr{2} &  \bluestr{2} & \str{0}  &  \str{0} & \str{1} & \str{0} & \bluestr{2} & \str{1}  & \bluestr{1}\\
        \end{tabular}
        \hspace{20pt}
        \begin{tabular}{ccccccccc}
            \str{0} &  \str{0} & \str{1} & \str{0} & \redstr{4} & \str{1} & \redstr{2} &  \redstr{5} & \redstr{3}\\
            \str{0} & \str{1} & \str{0} & \redstr{4} & \str{1} & \redstr{2} &  \redstr{5} & \redstr{3} & \str{0}\\
            \str{0} & \redstr{4} & \str{1} & \redstr{2} &  \redstr{5} &  \redstr{3} & \str{0}  & \str{0} & \str{1}\\
            \str{1} & \str{0} & \redstr{4} & \str{1} & \redstr{2} &  \redstr{5} &  \redstr{3} & \str{0}  & \str{0}\\
            \str{1} & \redstr{2} &  \redstr{5} & \redstr{3} & \str{0}  &  \str{0} & \str{1} & \str{0}  & \redstr{4}\\
            \redstr{2} &  \redstr{5} &  \redstr{3} & \str{0} & \str{0} & \str{1} & \str{0}  & \redstr{4} & \str{1}\\
            \redstr{3} & \str{0} & \str{0} & \str{1} & \str{0}  & \redstr{4} & \str{1} & \redstr{2} & \redstr{5}\\
            \redstr{4} & \str{1} & \redstr{2} &  \redstr{5} &  \redstr{3} & \str{0} & \str{0} & \str{1} & \str{0}\\
            \redstr{5} &  \redstr{3} & \str{0}  &  \str{0} & \str{1} & \str{0} & \redstr{4} & \str{1}  & \redstr{2}\\
        \end{tabular}
    \caption{BWT matrices for the necklaces $u=[\str{001021122}]$ (on the left) and $u' = [\str{001041253}]$ (on the right). The last column of each matrix correspond to their respective BWT's $w=\str{201021201}$ and $w'=\str{301041502}$. We highlight in blue (red) the letters of the BWT matrix of $u$ ($u'$) before (after) the substitution of the letters according to the construction of $w'$ in the proof of Theorem~\ref{th:karyunclustered}.} 
    \label{fig:karayunclustered}
\end{figure}

\section{Lower bounds}

For an alphabet of size $3$, we can obtain a lower bound on the number of words having completely unclustered BWT, using the \emph{generalized Euler's totient function}.

Let $p$ be a prime number. The generalized Euler's totient function counts the number of polynomials over $\GF_p$ of degree smaller than $n$ and coprime with $X^n-1$.
It can be computed using the formula
\begin{equation}\label{eq:euler}
  \Phi_p(n)=p^n\prod_{d | (n/\lambda_p(n))}\left(1-\dfrac{1}{p^{\ord_p(d)}}\right)^{\tfrac{\phi(d)}{\ord_p(d)}}  
\end{equation}

for $n>1$, 
where $\lambda_p(n)$ is the largest power of $p$ dividing $n$, and $\ord_p(d)$ is the multiplicative order of $d$ modulo $p$.

\begin{theorem}
    For every integer $n>0$, there exists at least $\Phi_3(n)/2n=\Omega(2^n/n)$ necklace of length $3n$ over $\Sigma_3$ having a completely unclustered $\BWT$.
\end{theorem}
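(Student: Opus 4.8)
The plan is to recognise $\Phi_3(n)/(2n)$ as the number of spanning arborescences of the generalized de~Bruijn graph $\DB(3,n)$, and then to attach to each such arborescence, injectively, a completely unclustered generalized de~Bruijn word of length $3n$ (equivalently, an Eulerian circuit of $\DB(3,n)$ via Theorem~\ref{thm:dbw-graph} and Lemma~\ref{lem:line graph}); the asymptotic bound will be an elementary estimate on \eqref{eq:euler}.

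First I would count arborescences spectrally. Writing the adjacency operator of $\DB(3,n)$ as $Af(a)=\sum_{\ell=0}^{2}f(3a+\ell)$ on $\mathbb{C}[\Z_n]$ and using the character basis $\chi_k(a)=\zeta^{ka}$, $\zeta=e^{2\pi i/n}$, one gets $A\chi_k=(1+\zeta^{k}+\zeta^{2k})\,\chi_{3k}$, so $A$ is a weighted permutation matrix along the orbits of $k\mapsto 3k$. The orbit $\{0\}$ gives the eigenvalue $3$; for any other orbit $C$ the telescoping identity $\prod_{k\in C}(1+\zeta^{k}+\zeta^{2k})=\prod_{k\in C}\tfrac{\zeta^{3k}-1}{\zeta^{k}-1}=1$ (reindexing by the bijection $k\mapsto 3k$ of $C$) shows the weights around $C$ multiply to $1$, hence the corresponding eigenvalues of $A$ are exactly the $|C|$-th roots of unity. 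By the matrix--tree theorem for balanced digraphs, the number of spanning arborescences toward a fixed root is $\tfrac1n\prod_{\lambda\neq 3}(3-\lambda)=\tfrac1n\prod_{C\neq\{0\}}(3^{|C|}-1)$. Grouping orbits by the divisor $d\mid m$ (with $m=n/\lambda_3(n)$) at whose primitive level they sit — there are $\phi(d)/e_d$ orbits of size $e_d$, where $e_d$ is the order of $3$ modulo $d$, while when $3\mid n$ the transient part of $k\mapsto 3k$ only adds the eigenvalue $0$ with multiplicity $n-m$, contributing a factor $3^{n-m}$ — this matches $\Phi_3(n)$ read off from \eqref{eq:euler} once the $d=1$ factor $3^{e_1}-1=2$ is peeled off; concretely one gets $\Phi_3(n)=2n\cdot(\#\text{arborescences of }\DB(3,n))$ for every $n$.

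Second, and this is where the real work lies, I would build an injection from spanning arborescences of $\DB(3,n)$ into completely unclustered generalized de~Bruijn words of length $3n$. Each arborescence produces $\prod_v(d^+(v)-1)!=2^n$ Eulerian circuits by the last-exit rule, and a tie at block $i$ of the \BWT{} is a purely local feature at vertex $i$ (last letter of block $i$ equals first letter of block $i{+}1$, cf.~Lemma~\ref{lem:tie charac}). Given an arborescence $T$ toward a fixed root, form its last-exit circuit using the label order $0,1,2$ on the non-tree out-edges, and run the tie-elimination of Lemma~\ref{lem:untie}: each step reroutes the circuit at a single vertex by transposing the cyclic order of its three through-subwalks (the Eulerian reading of Lemma~\ref{lem:Hamiltonian-rerouting}), and of the two possible cyclic orders exactly the transposition breaks the targeted tie, so the whole process is deterministic once we always clear the rightmost tie. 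One then checks that these reroutings can be taken so that the out-edge of each vertex used just before reaching the root is unchanged, so $T$ is recoverable from the resulting completely unclustered word; hence distinct arborescences give distinct necklaces and $\#\{\text{completely unclustered BWT necklaces of length }3n\}\ge\Phi_3(n)/(2n)$. The main obstacle is exactly this last verification — controlling the effect of the untying reroutings on the last-exit tree, equivalently bounding the fibres of the untying map by $\prod_v(d^+(v)-1)!=2^n$; should a clean control be unavailable, an alternative route is to show directly that at least one of the $2^n$ last-exit circuits attached to each arborescence is already completely unclustered, and to select it canonically.

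Finally, the asymptotics: since $\gcd(m,3)=1$, factoring $X^m-1$ over $\GF_3$ into $\phi(d)/e_d$ irreducibles of degree $e_d$ for $d\mid m$, formula~\eqref{eq:euler} reads $\Phi_3(n)=3^{n-m}\prod_{d\mid m}(3^{e_d}-1)^{\phi(d)/e_d}$, using $\sum_{d\mid m}\phi(d)=m$. Because $3^{e}-1\ge 2^{e}$ for all $e\ge 1$, each factor is at least $2^{\phi(d)}$, so $\Phi_3(n)\ge 3^{n-m}2^{m}\ge 2^{n-m}2^{m}=2^{n}$ (as $m\le n$). Therefore $\Phi_3(n)/(2n)\ge 2^{n}/(2n)=\Omega(2^{n}/n)$, which together with the injection above yields the claim.
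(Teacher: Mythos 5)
Your first step (the spectral count of arborescences of $\DB(3,n)$) and your last step (the estimate $\Phi_3(n)\ge 2^n$ from \eqref{eq:euler}) are fine in outline, and the identity $\Phi_3(n)=2n\cdot(\#\text{arborescences})$ is consistent, via the BEST theorem, with the count of generalized de Bruijn words that the paper cites. The genuine gap is the second step, the one you yourself flag as ``where the real work lies'': the injection from arborescences into completely unclustered words is never established. Worse, the specific property you would need --- that the untying reroutings ``can be taken so that the out-edge of each vertex used just before reaching the root is unchanged'' --- is in direct tension with what the rerouting does. In the Eulerian reading of Lemma~\ref{lem:Hamiltonian-rerouting}, untying the tie at block $i$ permutes the cyclic order in which the three through-subwalks at vertex $i$ are traversed, and which out-edge of $i$ is used last is exactly the data the last-exit arborescence records at $i$; so the rerouting generically changes the arborescence, and $T$ is not recoverable from the final word in the way you describe. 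Your fallback (that at least one of the $2^n$ last-exit circuits of every arborescence is already completely unclustered) is likewise asserted, not proved, and there is no obvious reason it should hold. As written, the proposal therefore does not prove the lower bound.

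The paper sidesteps arborescences entirely. It takes the known count $2^{n-1}\Phi_3(n)/n$ of generalized de Bruijn words of length $3n$ from~\cite{fici2025generalizedbruijnwordsinvertible}, runs the deterministic untying procedure of Lemma~\ref{lem:untie} (always clearing the rightmost tie) on each of them, and bounds the fibre of the resulting map onto completely unclustered words by $2^n$: the whole history of the procedure is encoded by the subset $S\subseteq\{0,\dots,n-1\}$ of blocks that were modified, each block modification being reversible, so at most $2^n$ starting words can land on the same output. Dividing by $2^n$ gives $\Phi_3(n)/(2n)$, and the asymptotic estimate is essentially the one you give. If you want to keep your framework, the cleanest repair is to replace your injection by this fibre bound: you do not need to recover the arborescence from the unclustered word, only to bound how many generalized de Bruijn words the untying map can send to the same necklace.
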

\begin{proof}
The number of generalized de Bruijn words of length $3n$ over $\Sigma_3$ %
is $2^{n-1}\cdot \dfrac{\Phi_3(n)}{n}$~\cite{fici2025generalizedbruijnwordsinvertible}. By applying successively Lemma~\ref{lem:untie} to any such word, we eventually reach a completely unclustered \BWT. Each such operation ends up choosing a set $S\subseteq [0,\dots n-1]$ of blocks to untie. Since there are $2^n$ such subsets, each completely unclustered de Bruijn word can be constructed from at most $2^n$ generalized de Bruijn words; hence, we obtain at least $\Phi_3(n)/2n$ such words.

Now, from equation~\ref{eq:euler}, and by observing that $\ord_3(d)\ge 1$ for any $d$ and that $$\sum_{d | (n/\lambda_p(n))} \frac{\phi(d)}{\ord_p(d)}\le \sum_{d | (n/\lambda_p(n))} \phi(d)=n/\lambda_p(n),$$ we obtain $$\dfrac{\Phi_3(n)}{2n} \ge \frac{1}{2n}\cdot 3^n(1-\frac{1}{3})^{n/\lambda_p(n)}\ge \frac{1}{2n}\cdot 3^n(1-\frac{1}{3})^n\ge \frac{2^{n-1}}{n}.$$
\end{proof}

\section{Special case related to Artin's conjecture}\label{sec:Artin}

We showed that necklaces with completely unclusterized \BWT exist for every length when the alphabet has size at least $3$, contrasting with the binary case (Theorem~\ref{thm:binary}). In particular, this case is related to a famous conjecture of Emil Artin. An integer $a$ is a \emph{primitive root} of a number $n$ if $a$ generates multiplicatively the group $\mathbb{Z}_p^*$, namely if $\{a^i \mod p|~i\in\mathbb{Z}\}=\{1,\dots,p-1\}$. 

\begin{conjecture}\label{conj:Artin}
Let $a$ be an integer that is not a square number and not $-1$. The set of $a$-rooted primes is infinite.    
\end{conjecture}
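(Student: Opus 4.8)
This statement is Artin's conjecture on primitive roots, which remains open; accordingly, no unconditional proof is known, and the plan I would follow yields only a conditional theorem together with an honest account of where the obstruction lies. The starting point is the reformulation of primitivity via Kummer extensions. Fix $a$ as in the statement, and for a prime $p \nmid a$ recall that $a$ is a primitive root modulo $p$ if and only if, for every prime $\ell \mid p-1$, the element $a$ is not an $\ell$-th power in $\mathbb{F}_p^*$. Writing $p \in S_k$ when $a$ is an $\ell$-th power modulo $p$ for every prime $\ell \mid k$ --- equivalently, when $p$ splits completely in the Kummer field $K_k = \mathbb{Q}(\zeta_k, a^{1/k})$ --- an inclusion--exclusion over squarefree $k$ expresses the $a$-rooted primes as the set of $p$ lying in $S_1$ but in no $S_\ell$, so that, heuristically, their density should be
\begin{equation*}
  \delta(a) = \sum_{k=1}^{\infty} \frac{\mu(k)}{[K_k : \mathbb{Q}]}.
\end{equation*}

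The plan is then to prove $\delta(a) > 0$, which immediately gives infinitely many $a$-rooted primes. First I would compute the degrees $[K_k : \mathbb{Q}]$: for generic $a$ one has $[K_k : \mathbb{Q}] = k\,\phi(k)$, with explicit corrections depending on the squarefree part of $a$ and on whether $a \equiv 1 \pmod 4$ (which can force extra entanglement between $\mathbb{Q}(\zeta_k)$ and $\mathbb{Q}(a^{1/k})$); the hypotheses that $a$ is neither a square nor $-1$ are exactly what prevent a degeneracy that would make $\delta(a)$ vanish. Substituting the generic degree $k\phi(k)$ collapses the sum over squarefree $k$ into an Euler product, namely Artin's constant
\begin{equation*}
  A = \prod_{\ell \text{ prime}} \left(1 - \frac{1}{\ell(\ell-1)}\right) \approx 0.3739,
\end{equation*}
and the entanglement corrections only multiply $A$ by an explicit nonzero rational factor, so $\delta(a) > 0$. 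Next I would make the heuristic rigorous: the number of $p \le x$ with $p \in S_k$ is governed by the Chebotarev density theorem for $K_k/\mathbb{Q}$, and the $a$-rooted primes up to $x$ are counted by truncating the inclusion--exclusion at a parameter $\xi = \xi(x)$ and separately bounding the tail contribution from $k > \xi$.

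The single hard step --- and the reason the assertion is a conjecture rather than a theorem --- is controlling the Chebotarev error terms uniformly in $k$. The effective Chebotarev bound needed, of size roughly $\sqrt{x}\,\log(\operatorname{disc} K_k)$ per field, is available only under the Generalized Riemann Hypothesis for the Dedekind zeta functions $\zeta_{K_k}$; summing such errors over $k \le \xi$ with $\xi$ a suitable power of $x$ then absorbs into the main term and delivers $\delta(a) > 0$. This is precisely Hooley's argument, and under GRH it proves the conjecture. Unconditionally, no error term strong enough to run the truncation over a range of $k$ growing with $x$ is known; the best available results are partial, such as Heath-Brown's theorem that there are at most two primes $a$ for which the conjecture fails, obtained by a sieve that sidesteps full control of every $K_k$ but cannot isolate any individual exceptional $a$. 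My proposal is therefore the Hooley scheme, and I would flag the unconditional removal of GRH in the Chebotarev step as the main obstacle --- one that currently appears out of reach.
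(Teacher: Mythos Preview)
Your proposal is appropriate: the statement in question is Artin's conjecture on primitive roots, and the paper does not prove it---it is explicitly labeled a \emph{conjecture} and invoked only as context for Theorem~\ref{thm:binary} and Theorem~\ref{thm:artin}. There is therefore no ``paper's own proof'' to compare against. Your recognition that the assertion is open, together with the sketch of Hooley's conditional argument under GRH and the mention of Heath-Brown's unconditional near-miss, is exactly the right response; no further proof is expected or possible here.
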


This happens because, in the binary case, a word that is a \BWT image and is completely unclustered is necessarily of the form $(10)^n$. Reciprocally, the word $(10)^n$ is a \BWT image if and only if $2n+1$ is an odd prime and $2$ generates the cyclic group $\mathbb{Z}_{2n+1}^*$, as stated in Theorem \ref{thm:binary}. This is no longer the case if the alphabet contains at least $3$ letters, since more words have maximal $\rle$. However, one can still find a trace of this phenomenon in the larger alphabet case:

\begin{theorem}\label{thm:artin}
Let $k\ge 2$, and $\alpha=k-1\dots \str{0}$, the word obtained by concatenating each letter of $\Sigma_k$ in decreasing lexicographical order. For every integer $n$, the word $\alpha^n$ is a \BWT image if and only if $kn+1$ is a prime and $k$ is a primitive root modulo $kn+1$.
\end{theorem}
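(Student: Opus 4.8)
The plan is to reduce the statement to an elementary fact about the permutation ``multiplication by $k$ modulo $kn+1$''. First I would observe that $\alpha^n$ has length $kn$, Parikh vector $(n,\dots,n)$, and is an alphabet-permutation power, since each of its length-$k$ blocks equals the permutation $\alpha$ of $\Sigma_k$. Consequently, if $\alpha^n=\BWT(u)$ for some necklace $u$, then $u$ is by definition a generalized de~Bruijn word of length $kn$, hence aperiodic, so by Lemma~\ref{lem:bwt-cycle} the permutation $\pi_{\alpha^n}$ (equivalently $\pi^{-1}_{\alpha^n}$) is a single $kn$-cycle; conversely, if $\pi_{\alpha^n}$ is a $kn$-cycle then Lemma~\ref{lem:bwt-cycle} already gives that $\alpha^n$ is a \BWT image. (Alternatively, one avoids generalized de~Bruijn words by noting that $\alpha^n$ is completely unclustered whereas $\BWT([v^j])$ with $j\ge 2$ is the letterwise $j$-fold expansion of $\BWT([v])$ and thus has a run of length $\ge 2$, so the preimage necklace must be primitive.) So everything comes down to deciding when $\pi^{-1}_{\alpha^n}$ is a single $kn$-cycle.

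Next I would compute $\pi^{-1}_{\alpha^n}$ explicitly. Numbering positions $0,\dots,kn-1$, position $p$ of $\alpha^n$ carries the letter $k-1-(p\bmod k)$, so the letter $s$ occurs exactly at the positions $km+(k-1-s)$ with $0\le m<n$. Reading off the positions of $\str{0}$, then of $\str{1}$, and so on, one gets, writing $q=sn+m$ with $0\le m<n$ and $s=\lfloor q/n\rfloor$,
\[
\pi^{-1}_{\alpha^n}(q)=km+k-1-s=kq-(kn+1)\lfloor q/n\rfloor+(k-1).
\]
Now set $N=kn+1$; since $\gcd(k,N)=\gcd(k,1)=1$, the map $M_k\colon x\mapsto kx\bmod N$ is a permutation of $\{1,\dots,N-1\}$. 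Let $\tau(q)=q+1$, a bijection $\{0,\dots,kn-1\}\to\{1,\dots,kn\}$. Using the displayed formula and the bound $0\le\pi^{-1}_{\alpha^n}(q)\le kn-1$, one checks that $\pi^{-1}_{\alpha^n}(q)+1=kq+k-N\lfloor q/n\rfloor$ lies in $\{1,\dots,kn\}$ and is therefore exactly the residue $(kq+k)\bmod N$; that is, $\tau\circ\pi^{-1}_{\alpha^n}=M_k\circ\tau$. Hence $\pi^{-1}_{\alpha^n}$ and $M_k$ are conjugate permutations, so $\pi^{-1}_{\alpha^n}$ is a single $kn$-cycle if and only if $M_k$ acts as a single $(N-1)$-cycle on $\{1,\dots,N-1\}$.

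Finally I would settle the latter by elementary group theory. The $M_k$-orbit of $1$ is the cyclic subgroup $\langle k\rangle\le\Z_N^*$. If $M_k$ is a single $(N-1)$-cycle, this orbit is all of $\{1,\dots,N-1\}$, so $\phi(N)=|\Z_N^*|=N-1$, which forces $N$ to be prime, and then $\langle k\rangle=\Z_N^*$, i.e.\ $k$ is a primitive root modulo $N$. Conversely, if $N=kn+1$ is prime and $k$ is a primitive root modulo $N$, then $\{1,\dots,N-1\}=\Z_N^*=\langle k\rangle$ and $M_k$ visits it in one orbit, hence is a single $(N-1)$-cycle. Together with the first two paragraphs, this proves the theorem.

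The routine parts are the reduction via Lemma~\ref{lem:bwt-cycle} and the last paragraph; the crux is the middle step, namely recognizing that the combinatorially defined permutation $\pi^{-1}_{\alpha^n}$ is, after the innocuous relabeling $q\mapsto q+1$, nothing but multiplication by $k$ in $\Z_N$. The one delicate point there is carefully tracking the floor functions and the ranges of the quantities involved to certify that $\pi^{-1}_{\alpha^n}(q)+1$ is already reduced modulo $N$, so that no spurious correction term appears. This also recovers and generalizes the binary case of Theorem~\ref{thm:binary}, where $\alpha^n=(\str{10})^n$.
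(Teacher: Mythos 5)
Your proof is correct and follows essentially the same route as the paper: reduce via Lemma~\ref{lem:bwt-cycle} to the (inverse) standard permutation being a single cycle, identify that permutation (after the shift $q\mapsto q+1$) with multiplication by $k$ modulo $kn+1$, and conclude by noting that the orbit of $1$ is $\langle k\rangle\subseteq\Z_{kn+1}^*\subseteq\{1,\dots,kn\}$, forcing primality and primitivity. You are in fact somewhat more careful than the paper's sketch, both in verifying the conjugation to $x\mapsto kx\bmod(kn+1)$ explicitly and in justifying that the preimage necklace must be aperiodic before invoking Lemma~\ref{lem:bwt-cycle}.
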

\begin{proof}
The proof is essentially the same as in the binary case~\cite{mantaci_measuring_2017}. By relabeling and considering the standard permutation $\pi_{\alpha^n}$ on the set $\{1,\dots kn\}$, one can express $\pi_{\alpha^n}(i)=ki \mod kn+1$, hence $\pi_{\alpha^n}^i(1)=k^i \mod kn+1$. This is a single cycle if and only if $\{k^i \mod kn+1|~i\in\mathbb{Z}\}=\{1,\dots,kn\}$. 

Since $k$ and $kn+1$ are coprime, the left-hand set is contained in $\mathbb{Z}_{kn+1}^*$,
and since $\mathbb{Z}_{kn+1}^*\subseteq\{1,\dots,kn\}$, the three sets are equal.
In particular, $|\mathbb{Z}_{kn+1}^*| = kn$, which holds if and only if $kn+1$ is prime.
Thus $kn+1$ is prime and $k$ is a primitive root modulo $kn+1$.
\end{proof}

\begin{remark}
    Note that, when $\alpha^n$ is a \BWT image, the necklace $u$ such that $\BWT(u)=\alpha^n$ is, by definition, a generalized de Bruijn word (see also Example~\ref{ex:stand-perm} for the case $k=3$ and $n=6$). However, it is never an ordinary de Bruijn word (equivalently, an integer $n$ such that $kn=k^m$ for some integer $m$ never satisfies the condition of Theorem~\ref{thm:artin}). This was shown in~\cite{mantaci_measuring_2017} for the case $k=2$. For $k>2$, observe that $k^m\equiv -1 \mod k^m+1$, so $k^{2m}\equiv 1 \mod k^m+1$. This means that the order of $k\mod k^m+1$ must divide $2m$. But $2m\le 3^m \le  k^m$ for every $m\ge 1$, so $k$ cannot have order $k^m$, and $k^m+1$ cannot have $k$ as a primitive root. 
\end{remark}

\bibliography{references}

\end{document}